\documentclass[10,journal]{IEEEtran}
\usepackage{amsmath,amssymb,amsfonts}
\usepackage{algorithmic}
\usepackage{algorithm}
\usepackage{array}
\usepackage[caption=false,font=normalsize,labelfont=sf,textfont=sf]{subfig}
\usepackage{textcomp}
\usepackage{stfloats}
\usepackage{url}
\usepackage{verbatim}
\usepackage{graphicx,psfrag,epsfig}
\usepackage{cite}
\usepackage{amssymb}
\usepackage{amsmath,lipsum}
\usepackage{cuted}
\usepackage{balance}
\usepackage{arydshln}
\usepackage{bm}
\usepackage{amsmath}
\usepackage{amsthm}
\usepackage{subeqnarray}
\usepackage{cases}
\usepackage{xcolor}
\usepackage{soul}
\sethlcolor{green}
\usepackage{xcolor}

\graphicspath{{./}}
\newtheorem{Theorem}{Theorem}
\newtheorem{Lemma}{Lemma}

\hyphenation{op-tical net-works semi-conduc-tor IEEE-Xplore}

\def\linspread{1}

\linespread{\linspread}		
%



\begin{document}


\title{A Covariance-Surrogate Trust-Region Framework for Movable-Antenna Enabled Anti-Jamming with Unknown Jammers}

\author{Lebin Chen, \IEEEmembership{Student Member,~IEEE,} Ming-Min Zhao,
	\IEEEmembership{Senior Member,~IEEE}, 
	Qingqing Wu, \IEEEmembership{Senior Member,~IEEE},
	 Min-Jian Zhao, \IEEEmembership{Member,~IEEE}, and Rui Zhang, \IEEEmembership{Fellow,~IEEE}

\thanks{{A preliminary version of this work has been accepted for
	 	presentation at the 2026 IEEE International Conference on
	 	Communications \cite{chen2026icc}.}}

	\thanks{L. Chen, M. M. Zhao, and M. J. Zhao are with the College of Information Science and Electronic
		Engineering, Zhejiang University, Hangzhou 310027, China, and also with the Zhejiang Provincial Key Laboratory of Multi‐Modal Communication Networks and Intelligent Information Processing, Hangzhou 310027, China (e-mail:
		12431101@zju.edu.cn; zmmblack@zju.edu.cn; mjzhao@zju.edu.cn).}
		\thanks{Qingqing Wu is with the Department of Electronic Engineering, Shanghai Jiao Tong University, Shanghai 200240, China (e-mail: qingqingwu@sjtu.edu.cn).}
	\thanks{
		R. Zhang is with 
		the Department of Electrical and Computer Engineering, National University of Singapore, Singapore 117583 (e-mail: elezhang@nus.edu.sg).
		}
}



\maketitle

\begin{abstract}
In this paper, we investigate a movable antenna (MA) enabled anti-jamming optimization problem, where a legitimate uplink system is exposed to multiple jammers with unknown jamming channels. 
To enhance the anti-jamming capability of the considered system, an  MA array is deployed at the receiver, and the antenna positions and the minimum-variance distortionless-response (MVDR) receive beamformer are jointly optimized to maximize the output signal-to-interference-plus-noise ratio (SINR). 
The main challenge arises from the fact that the interference covariance matrix is unknown and nonlinearly dependent on the antenna positions. 
To overcome these issues, we propose a surrogate objective by replacing the unknown covariance with the sample covariance evaluated at the current antenna position anchor. 
Under a two-timescale framework, the surrogate objective is updated once per block (contains multiple snapshots) at the current anchor position, while the MVDR beamformer is adapted on a per-snapshot basis. 
We establish a local bound on the discrepancy between the surrogate and the true objective by leveraging matrix concentration inequalities, and further prove that a natural historical-averaging surrogate suffers from a non-vanishing geometric bias.
Building on these insights, we develop a low-complexity projected trust-region (TR) surrogate optimization (PTRSO) algorithm that maintains the locality of each iteration via TR constraints and enforces feasibility through projection, which is guaranteed to converge to a stationary point of the surrogate problem near the anchor.
Numerical results verify the effectiveness and robustness of the proposed PTRSO algorithm, which consistently achieves higher output SINR than existing baselines.

\end{abstract}

\begin{IEEEkeywords}
	Anti-jamming, movable antenna (MA), minimum variance distortionless response (MVDR),  trust-region (TR) methods, surrogate optimization.
\end{IEEEkeywords}

\section{Introduction}
\IEEEPARstart{A}{ctive} jamming has become a central topic in wireless communications because it threatens the reliability of modern networks and safety-critical services \cite{AntiJammingSurvey,securitySurvey}. 
As spectrum reuse becomes denser and wireless links turn highly directional, even short bursts of targeted interference can severely disrupt communication, sensing, and control. 
Fixed-position antenna (FPA) arrays, such as uniform linear/planar arrays (ULAs/UPAs) and sparse arrays \cite{sparse}, are the workhorse of current communication, radar, and integrated sensing-and-communication deployments \cite{intro7,intro9,ap1}. 
In conventional FPA-based designs, physical-layer anti-jamming is typically pursued via spatial processing, which exploits the spatial degrees of freedom (DoFs) to differentiate the legitimate signal from jamming interference. A coherent strategy is to steer deep nulls toward the jammer's direction of arrival (DoA) while maintaining high gain in the desired direction \cite{Survey208,Survey227}. Techniques such as subspace projection based on multiple-input multiple-output (MIMO) processing \cite{Survey46,Survey47} and zero-forcing jamming suppression  mitigate interference by projecting the received signal onto a subspace orthogonal to the jamming channel \cite{Survey111}. 
However, in all these methods the array geometry is treated as fixed, so the achievable interference suppression capability is ultimately constrained by the static spatial response of the FPA.
In rich multipath environments, where spatial diversity is crucial for improving the receiver signal-to-noise ratio (SNR) and link reliability, FPAs are unable to adapt to substantial spatial and temporal variations of wireless channels and may therefore yield poor anti-jamming performance.

Fortunately, movable antenna (MA) systems, also known as fluid antenna systems (FASs) \cite{zhu2024historicalreviewfluidantenna}, offer a promising way to overcome FPA limitations. 
In an MA array, antenna elements are connected to RF chains through flexible links and can be physically repositioned in real time, thereby unlocking an additional layer of spatial reconfigurability on top of conventional beamforming. 
Under the field-response modeling framework, repositioning allows the receiver to reconfigure spatial responses, enhance desired-signal power, steer or carve spatial nulls, and flexibly shape beampatterns, which can be exploited for both sensing and communication tasks \cite{ma2024movableantennaenhancedwireless}. 
When the antenna-position vector is appropriately optimized, MA-enabled MIMO systems can effectively increase the usable aperture without adding hardware, leading to tangible capacity and SNR gains over their FPA counterparts \cite{10243545,MAsurvey,MA2}. 
These advantages have been demonstrated in diverse scenarios, including angle-of-arrival estimation and target detection \cite{ma2024movableantennaenhancedwireless,ISACMA}, joint position-and-beamforming optimization for ISAC \cite{ISACMA,FASforISAC1,R2,R3,R4}.
Such results collectively indicate that MA arrays are especially well suited to environments where propagation conditions, interference geometry, and performance objectives evolve on heterogeneous timescales.

Against this backdrop, research on MA-enabled anti-jamming is still in its infancy and, more importantly, that existing designs almost invariably rely on explicit information about the jammers. 
For instance, \cite{TVTantiJamming} assumed that the DoA information of both legitimate and jamming signals is available (often aided by learning-based tools), and then jointly optimized beamforming and antenna positions to suppress the jammers while preserving the desired signal. 
A complementary line of work adopted robust formulations in which each jammer was assumed to lie within a bounded angular-uncertainty set, and the MA geometry was optimized to guarantee performance against the worst-case realization in that set \cite{ICCantiJamming}. 
In parallel, \cite{ZeroForcing} showed that, by jointly optimizing the antenna-position vector and a zero-forcing (ZF) beamformer, an MA array can achieve full array gain for the desired user while steering deep spatial nulls toward interference directions. 
Nevertheless, these MA-aided anti-jamming and beamforming schemes still rely on substantial jammer-side prior information: they require either accurate DoAs (and often powers) of the jammers or well-calibrated angular uncertainty regions and quasi-static channel parameters, while ZF-based designs also demand sufficiently accurate channel state information at the receiver. 
In practical adversarial environments with noncooperative and possibly agile jammers, such detailed information may be unavailable or unreliable, which fundamentally limits the robustness of these approaches.

Motivated by this gap, and to the best of our knowledge, this work is the first to depart from such assumptions and study an MA-enabled receiver under unknown jamming channels, without requiring explicit jammer DoAs or parametric channel estimates. 
We consider an uplink communication system where an MA array is deployed at the receiver and the goal is to jointly optimize the antenna-position vector and the minimum variance distortionless response (MVDR) beamformer so as to maximize the output signal-to-interference-plus-noise ratio (SINR). 
The key difficulty lies in the fact that the interference-plus-noise covariance matrix is both unknown and nonlinearly dependent on the antenna positions, and can only be inferred from a finite number of snapshots, which inevitably introduces statistical uncertainty. 
The main contributions of this paper are summarized as follows:
\begin{itemize}
	\item 
	First, focus on the MA-enabled uplink system with unknown jammers, we formulate an anti-jamming optimization problem where the receiver jointly optimizes the MA positions and the MVDR receive beamformer to maximize the output SINR.
	The interference-plus-noise covariance seen by each antenna-position anchor is replaced by the corresponding blockwise sample covariance, which serves as a tractable surrogate for the true MVDR SINR objective.
	Moreover, by establishing the Lipschitz continuity of the array response and the covariance matrix with respect to antenna positions and invoking matrix concentration inequalities, we derive local error bounds that quantify how closely the surrogate objective tracks the true SINR as a function of the snapshot budget and the distance to the anchor.
	We further compare this local-anchor surrogate with a natural historical-averaging surrogate and prove that the latter introduces a non-vanishing geometric bias in the MVDR cost, thereby theoretically justifying the proposed surrogate design and clarifying the tradeoff between geometric adaptivity and statistical reliability.

	\item 
	Next, building on the above surrogate model, we propose a projected trust-region surrogate optimization (PTRSO) algorithm to optimize the MA positions.
	For each fixed covariance surrogate at a given anchor, PTRSO proceeds in iterations that construct a local quadratic model of the surrogate objective around the current MA configuration, compute a trust-region (TR) trial step by approximately solving the associated TR subproblem via Hessian-vector products and a Steihaug-conjugate-gradient procedure, and then project the trial point onto the MA-geometry constraint set.
	We show that PTRSO generates a sequence of iterates whose accumulation points satisfy first-order stationarity conditions of the surrogate problem, while its per-iteration complexity scales as $\mathcal{O}(N_r^2)$. This makes the algorithm suitable for large MA arrays and more favorable than naive line-search methods, such as projected gradient or Newton-type updates.

	\item 
	Finally, numerical results verify the effectiveness and robustness of the proposed PTRSO-based MA design.  
	Across a wide range of user SNRs, snapshot budgets, jammer numbers, and array sizes, PTRSO consistently achieves higher output SINR and more reliable anti-jamming performance than projected gradient, projected Newton, historical-average, and fixed-position array baselines.  
	The simulations corroborate the derived approximation bounds, reveal the impact of the snapshot budget and the anchor displacement on the surrogate fidelity, and highlight the importance of TR constraints and locality in reliably transferring surrogate gains into true SINR improvements.
\end{itemize}

The remainder of this paper is organized as follows. 
Section~II introduces the system model and problem formulation.
Section~III develops the theoretical properties of the proposed covariance surrogate and compares the local-anchor surrogate with a historical-averaging alternative. 
Section~IV presents the PTRSO algorithm together with its convergence property and an intuitive analysis of its behavior. 
Section V provides numerical results. Finally, conclusions
are drawn in Section VI.

\textit{Notations:}
Scalars, vectors, and matrices are denoted by lower/upper case, bold-face lower-case, and bold-face uppercase letters, respectively.
The symbol $j$ denotes the imaginary unit, i.e., $j^2=-1$.
$(\cdot)^T$, $(\cdot)^*$ and $(\cdot)^H$ denote the transpose, conjugate and conjugate transpose operators, respectively. 
The inner product between two complex vectors $\mathbf a$ and $\mathbf b$ is denoted by
$\langle \mathbf a,\mathbf b\rangle \triangleq \mathbf a^{H}\mathbf b$.
We use $\Re\{x\},\Im\{x\}$ and $\angle x $ to denote the real part, imaginary part and argument of a complex number $x$. $\mathbf{I}$, $\mathbf{0}$ and $\mathbf{1}$  are used to represent identity matrix, all-zero vector and all-one vector with proper dimensions, respectively. 
$\|\cdot\|$ denotes the Euclidean norm of a complex vector
and the spectral norm of a complex matrix, $\|\cdot\|_{\mathrm F}$ denotes the
Frobenius norm of a matrix, $\|\cdot\|_{\psi_2}$ denotes the sub-Gaussian
norm of a real-valued random variable (i.e., $\|a\|_{\psi_2}
= \inf\{t >0:\mathbb{E}\{e^{a^2/t^2}\}\le2  \}$), $\|\cdot\|_{L_2}$
denotes the $L^2$-norm of a random variable (i.e., $\|a\|_{L_2}^2
= \mathbb{E}\{|a|^2\}$), and $|\cdot|$ denotes the absolute value of a
complex number.
For a scalar argument $t\ge 0$,
$f(t) = \mathcal{O}(\phi(t))$ as $t \to 0$ means that 
$\lvert f(t)/\phi(t)\rvert$ remains bounded for all sufficiently small $t$.
The set of integers is denoted by $\mathbb{Z}$, and the sets of $P \times Q$ dimensional complex, real and positive real matrices are denoted by $\mathbb{C}^{P\times Q}$, $\mathbb{R}^{P\times Q}$ and $\mathbb{R}^{P\times Q}_{++}$, respectively.

\begin{figure}[t]
	\centering
	\includegraphics[width=2.4in]{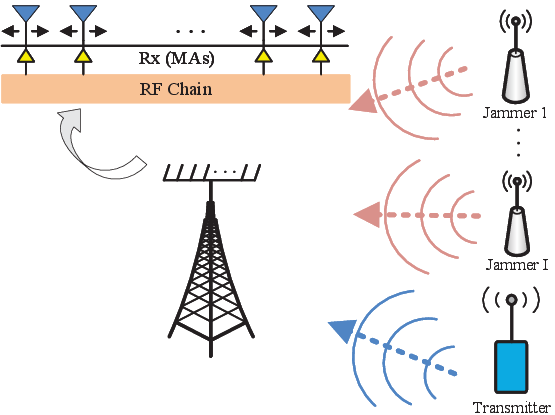}
	\caption{Illustration of MA-enabled MIMO receiver under jamming attacks.}
	\label{model}
	\vspace{-0cm}
\end{figure}

\section{System Model And Problem Formulation}
\subsection{System Setting}
In this paper, we consider an MA-enabled anti-jamming communication system in the presence of \(I\) unknown jammers, and the receiver is equipped with \(N_r\) MAs collecting \((I\!+\!1)\) far-field uncorrelated signals \(\{s_i\}_{i=0}^{I}\), where \(s_0\) is the desired communication signal from a single-antenna transmitter, and \(\{s_i\}_{i=1}^{I}\) are interference signals generated by the jammers, as depicted in Fig. \ref{model}.
{For practicality, we assume that the jammers and the receiver are {noncooperative}, thus the jammer-side channel information is unavailable to the receiver.
	To cope with this lack of prior knowledge, the receive MAs can be dynamically repositioned in real-time through flexible cables linked to the RF chains to exploit its spatial reconfigurability\cite{MAsurvey}.}
Furthermore, we assume narrow-band quasi-static channels, such that the time overhead for adjusting MA positions is tolerable compared to the much longer channel coherence time of the transmitter\cite{MAsurvey}. More specifically, we consider a linear MA array of size $N_r$ at the receiver and denote $\mathbf{x}=[x_1,x_2,...,x_{N_r}]^T\in\mathbb{R}^{N_r}$ as the corresponding antenna position vector (APV). The corresponding steering vector can be written as
\begin{equation}
	\label{eq:a_theta}
	\mathbf{a}(\mathbf{x},\theta)=\begin{bmatrix}e^{jkx_1\sin\theta},e^{jkx_2\sin\theta},...,e^{jkx_{N_r}\sin\theta}\end{bmatrix}^T,
\end{equation}
where $k=\frac{2\pi}{\lambda}$, $\lambda$ denotes the carrier wavelength and $\theta$ is the azimuth angle of arrival (AoA).

In the considered system, we adopt a field-response-based channel model \cite{10243545}, where the number of communication paths is denoted as $L$, the azimuth angle of departure (AoD) of the $\ell$-th ($\ell=1,2,...,L$) path is denoted by $\theta_\ell\in[0,2\pi)$, and the channel gain vector is defined as $\bm{\alpha}\in\mathbb{C}^{L\times1}$. 
Thus, the communication channel vector between the legitimate transmitter and receiver is given by 
\begin{equation}
	\label{eq:h0}
	\mathbf{h}_0(\mathbf{x})= [\mathbf{a}(\mathbf{x},\theta_1),\mathbf{a}(\mathbf{x},\theta_2),...,\mathbf{a}(\mathbf{x},\theta_L)] \bm{\alpha}\triangleq
	\mathbf{A}(\mathbf{x})\boldsymbol{\alpha},
\end{equation}
which is assumed to be known to the receiver \cite{hybridAnti,10243545}.
The signals collected by all receiving antennas are processed by a digital beamformer $\mathbf{w}\in\mathbb{C}^{{{N_r}}\times 1}$, then the output signal can be expressed as
\begin{equation} 
	\label{eq:hbf_out}
	y(\mathbf{x},t)=\mathbf{w}^{H}\mathbf{r}(\mathbf{x},t),
\end{equation}
where $t$ represents the time index, and $\mathbf{r}(\mathbf{x},t)$ is the received signal at the $t$-th time instant which can be written as
\begin{equation} 
	\label{eq:r_t}
	\mathbf{r}(\mathbf{x},t)=\mathbf{h}_0(\mathbf{x}) s_0(t)+\sum_{i=1}^{I}\mathbf{g}_i(\mathbf{x}) s_i(t)+\mathbf{n}(t),
\end{equation}
$\mathbf{g}_i(\mathbf{x})\in\mathbb{C}^{N_r\times 1}$ represents the jamming channel vector between the $i$-th jammer and the receiver, and $\mathbf{n}(t)\in\mathbb{C}^{N_r\times 1}$ denotes the noise vector whose elements follow the complex Gaussian distribution $\mathcal{CN}(0,\sigma_n^2)$.
Since we assume narrow-band quasi-static channels, over each static time
interval the desired symbol process $\{s_0(t)\}$ can be modeled as
zero-mean and wide-sense stationary (WSS) with average power
$\sigma_s^{2}\triangleq\mathbb{E}\{|s_0(t)|^{2}\}$, and each jammer
symbol process $\{s_i(t)\}_{i=1}^{I}$ is also WSS
with constant second moment
$\sigma_i^{2}\triangleq\mathbb{E}\{|s_i(t)|^{2}\}$, $i=1,\ldots,I$.
The processes $\{s_0(t)\}$, $\{s_i(t)\}_{i=1}^I$, and
$\{\mathbf n(t)\}$ are mutually independent across $t$.
In particular, the communication signal is uncorrelated with the
interference and noise.

For the jamming model, it is assumed that the jammers can interfere the receiver from various angles. The jamming channel vector between the $i$-th jammer and the receiver is modeled as~\cite{hybridAnti}\footnote{Because the jammers and receiver are noncooperative, accurately estimating the jamming channels via pilots is difficult. We therefore treat the jamming channels as unknown. The parametric model (\ref{eq:gi}) here is only used for illustration purpose.}
\begin{equation}
	\label{eq:gi}
	\mathbf{g}_i(\mathbf{x})
	=\zeta_i\mathbf{a}(\mathbf{x},\phi_i),
\end{equation}
where $\zeta_i$ represents the complex channel gain,
and $\phi_i$ is the AoA.
The SINR after receive beamforming can thus be expressed as
\begin{equation}
	\label{eq:sinr}
	\begin{aligned}
		\mathrm{SINR}
		&=\frac{\mathbb{E}\left\{\left|\mathbf{w}^{H}\mathbf{h}_0(\mathbf{x})s_0(t)\right|^{2}\right\}}
		{\mathbb{E}\left\{\left|\mathbf{w}^{H}\left(\sum_{i=1}^{I}\mathbf{g}_i(\mathbf{x}) s_i(t)+\mathbf{n}(t)\right)\right|^{2}\right\}}\\
		&=\frac{\sigma_s^{2}\left|\mathbf{w}^{H}\mathbf{h}_0(\mathbf{x})\right|^{2}}
		{\mathbf{w}^{H}\mathbf{R}_{i+n}(\mathbf{x})\mathbf{w}},
	\end{aligned}
\end{equation}
where 
$\mathbf{R}_{i+n}(\mathbf{x})=\mathbb{E}\left\{\mathbf{p}(\mathbf{x},t)\mathbf{p}(\mathbf{x},t)^{H}\right\}$ represents the interference-plus-noise covariance matrix with $\mathbf{p}(\mathbf{x},t)=\sum_{i=1}^{I}\mathbf{g}_i(\mathbf{x}) s_i(t)+\mathbf{n}(t)$.
The covariance matrix of $\mathbf{r}(\mathbf{x},t)$ can be further expressed as
\begin{equation} 
	\label{eq:Rxx}
	\begin{aligned}
		\mathbf{R}(\mathbf{x})&=\mathbb{E}\left\{\mathbf{r}(\mathbf{x},t)\mathbf{r}(\mathbf{x},t)^{H}\right\}\\
		&=\sigma_s^{2}\mathbf{h}_0(\mathbf{x})\mathbf{h}_0(\mathbf{x})^{H}
		+\sum_{i=1}^{I}\sigma_i^{2}\mathbf{g}_i(\mathbf{x})\mathbf{g}_i(\mathbf{x})^{H}
		+\sigma_n^{2}\mathbf{I}_{N_r}.
	\end{aligned}
\end{equation}

\subsection{Problem Formulation}
In this paper, we consider a joint receive beamformer and antenna position optimization problem in the presence of multiple jammers under the MVDR framework \cite{AdaptiveArrays}.
In particular, we aim to maximize the output SINR, or equivalently minimize the output variance $\mathbf{w}^{H} \mathbf{R}(\mathbf{x})\mathbf{w}$,
subject to a {unit-gain distortionless constraint} for the desired signal and the array geometry constraints, i.e.,
\begin{equation}
	\label{eq:8}
	\begin{aligned}
		\min_{\mathbf{w},\mathbf{x}}\quad 
		& \mathbf{w}^{H} \mathbf{R}(\mathbf{x})\mathbf{w} \\
		\text{s.t.}\quad 
		& \left|\mathbf{w}^{H}\mathbf{h}_0(\mathbf{x})\right|=1,\\
		&\begin{vmatrix}
			x_m-x_n
		\end{vmatrix}\ge{d}, 0\le x_i \le D_x, \\ &1\le{m},{n},{i}\le{N_r},m\neq{n},
	\end{aligned}
\end{equation}
where the first constraint ensures that the communication signal passes through without distortion, $d$ in the second constraint is the minimum distance (usually we set $d=\frac{\lambda}{2}$) between any two MAs to avoid the coupling effect, and $D_x$ is the aperture of the overall receive antenna array.\footnote{{{For the single desired stream considered herein, the linear minimum mean-square error (LMMSE)
	and MVDR receive beamformers have the same direction. In particular,
	the LMMSE beamformer is proportional to
	$\big(
	\sigma_s^2\mathbf h_0(\mathbf x)\mathbf h_0^H(\mathbf x)
	+\mathbf R_{i+n}(\mathbf x)
	\big)^{-1}\mathbf h_0(\mathbf x)$, which is further proportional to
	$\mathbf R_{i+n}^{-1}(\mathbf x)\mathbf h_0(\mathbf x)$ by the matrix
	inversion lemma. Hence, they achieve the same output SINR, which is
	invariant to nonzero scaling. We adopt the MVDR
	normalization to impose the unit-response constraint
	$|\mathbf w^H\mathbf h_0(\mathbf x)|=1$.}}}
Due to the equivalence between the antennas, we assume $x_1<x_2<...<x_{N_r}$  without loss of optimality. 
Thus, the second constraint in~\eqref{eq:8} can be equivalently transformed into a linear inequality constraint $\mathbf{U}\mathbf{x}\preceq\mathbf{l}$ and the corresponding feasible set for $\mathbf x$ can be defined as $\mathcal X \triangleq \{\mathbf x: \mathbf{U}\mathbf{x}\preceq\mathbf{l}\}$, where 
\begin{equation}
	\label{q16}
	\begin{aligned}
		\mathbf{U}&=\left[\begin{array} {ccccccc}
			1 & -1 & 0 & \cdots & 0 & 0\\
			0 & 1 & -1 & \cdots & 0 & 0\\
			\vdots & \vdots & \vdots & \ddots & \vdots & \vdots\\
			0 & 0 & 0 & \cdots & 1 & -1\\
			-1 & 0 & 0 & \cdots & 0 & 0\\
			0 & 0 & 0 & \cdots & 0 & 1
		\end{array}\right]_{(N_r+1)\times{N_r}},
	\end{aligned}
\end{equation}
\begin{equation}
	\begin{aligned}
		\mathbf{l}&=\left[
		-d,  -d ,\cdots , -d ,0, D_x
		\right]^T_{(N_r+1)\times1}.
	\end{aligned}
\end{equation}

With given $\mathbf x$, the following well-known solution can be found for the optimal beamforming vector \cite{AdaptiveArrays}:\footnote{Equation~(\ref{wopt}) is the oracle MVDR beamformer that assumes perfect knowledge of the true covariance $\mathbf R(\mathbf x)$ and is used only to eliminate $\mathbf w$ and derive the antenna-position-only objective in (\ref{P}). In practice, $\mathbf R(\mathbf x)$ is unavailable, and thus the implemented beamformer is computed from an estimated surrogate covariance $\hat{\mathbf R}$ (e.g., the blockwise sample covariance), as described below.}
\begin{equation}
	\label{wopt}
	\begin{aligned}
		\mathbf{w}_\text{opt}(\mathbf x)=\frac{\mathbf{R}^{-1}(\mathbf{x})\mathbf{h}_0(\mathbf{x})}{\mathbf{h}_0^H(\mathbf{x}) \mathbf{R}^{-1}(\mathbf{x})\mathbf{h}_0(\mathbf{x})}.
	\end{aligned}
\end{equation}
Consequently, problem \eqref{eq:8} can be equivalently transformed into the following problem after adopting \eqref{wopt} without loss of optimality:
\begin{equation}
	\label{P}
	\begin{aligned}
		\max_{\mathbf{x} \in \mathcal X}\quad 
		{g}(\mathbf{x})\triangleq\mathbf{h}_0^H(\mathbf{x}) \mathbf{R}^{-1}(\mathbf{x})\mathbf{h}_0(\mathbf{x}).
	\end{aligned}
\end{equation}

\begin{figure}[t]
	\centering
	\includegraphics[width=3.5in]{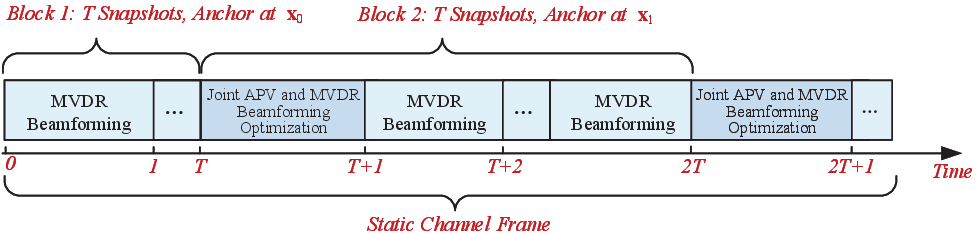}\vspace{-0.0cm}
	\caption{Illustration of the proposed two-timescale framework.}
	\label{twotimescale}
	\vspace{-0.0cm}
\end{figure}

Notably, the covariance matrix $\mathbf{R}(\mathbf{x})$ is unavailable in practice and is typically estimated from a limited number of snapshots. 
For practical considerations, e.g., reducing computational complexity and optimization overhead, we adopt a two-timescale optimization strategy, where the beamforming vector $\mathbf w$ is updated on a per-snapshot basis using a blockwise sample covariance estimate within the current block, while the antenna position $\mathbf x$ is optimized less frequently (i.e., once per block of $T$ snapshots) to cater to the slower MA actuation. 
Therefore, as illustrated in Fig.~\ref{twotimescale}, we partition the static channel time frame into blocks of 
$T$ snapshots. Within each block, the antenna positions are fixed at the current configuration, which we denote as the {antenna position anchor}, e.g., \(\mathbf x_0\).
In the proposed two-timescale design, the antenna positions are kept fixed at the current anchor \(\mathbf x_i\) over a block of \(T\) snapshots, based on which we can obtain a blockwise sample covariance at \(\mathbf x_i\) as
\begin{equation}
	\label{eq:9}
	\hat{\mathbf{R}}(\mathbf{x}_i)
	=\frac{1}{T}\sum_{t=1}^{T}\mathbf{r}(\mathbf{x}_i,t)\mathbf{r}^{H}(\mathbf{x}_i,t).
\end{equation}
Then, by replacing the unknown interference covariance matrix $\mathbf R(\mathbf x)$ in (\ref{P}) by \(\hat{\mathbf R}(\mathbf x_i)\), a block-level surrogate objective can be acquired for updating the antenna positions, and the corresponding problem can be rewritten as follows:
\begin{equation}
	\label{P1}
	\begin{aligned}
		\max_{\mathbf{x}\in \mathcal X}\quad 
		\hat{g}(\mathbf{x})\triangleq\mathbf{h}_0^H(\mathbf{x}) \hat{\mathbf{R}}^{-1}(\mathbf{x}_i)\mathbf{h}_0(\mathbf{x}).
	\end{aligned}
\end{equation}
Although replacing \(\mathbf R(\mathbf x)\) with the blockwise sample covariance \(\hat{\mathbf R}(\mathbf x_i)\) at the current anchor $\mathbf x_i$ may look trivial as the true covariance \(\mathbf R(\mathbf x)\)  varies with the antenna positions, we can justify that increasing the surrogate objective \(\hat g(\mathbf x)\) also improves the true objective \(g(\mathbf x)\) within a neighborhood of the anchor $\mathbf x_i$. In the next subsection, we will establish a quantitative local error bound for \(\hat g(\mathbf x)\), and show that the mismatch \(|g(\mathbf x)-\hat g(\mathbf x)|\) is controlled by two terms, one proportional to the displacement from the anchor $\mathbf{x}_i$, i.e., \(\|\mathbf x-\mathbf x_i\|\), and another that decreases with the increasing of the snapshot budget \(T\). 
Consequently, keeping the updates close to the anchor $\mathbf{x}_i$ and increasing $T$ tighten this bound, thereby ensuring that the gains achieved on the surrogate objective reliably transfer to those on the true objective $g(\mathbf{x})$.\footnote{As an alternative, one may estimate the interference AoAs and powers per block (e.g., via parametric subspace methods) and then synthesize a parametric interference covariance. We instead adopt a nonparametric, blockwise sample-covariance surrogate here because (i) it avoids model-order selection and repeated spectral decompositions, reducing computational overhead; and (ii) it is empirically robust with finite snapshot number $T$ and already achieves high performance in our simulations. Parametric covariance reconstruction based on AoA/power estimation is a natural extension and will be pursued in our follow-up work.}

\begin{figure}[t]
	\centering
	\includegraphics[width=2.4in]{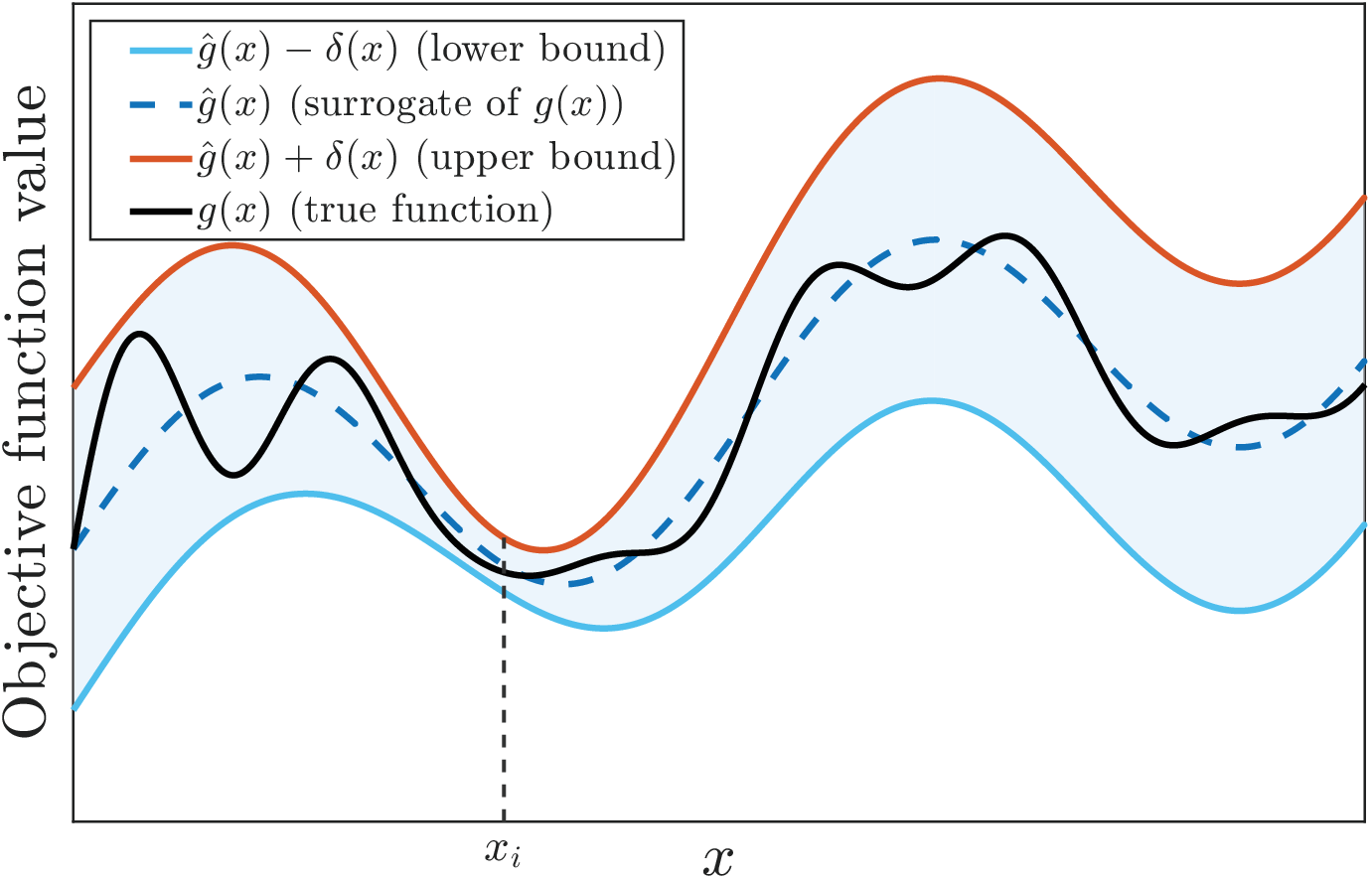}
	\caption{Toy example of surrogate and true function geometry.}
	\label{surragate}
	\vspace{-0cm}
\end{figure}

\section{Theoretical Properties of Covariance Surrogates}
In this section, we study the theoretical properties of the covariance
surrogates \(\hat{\mathbf R}(\mathbf x_i)\) that support our MA position optimization.
Our main objective is twofold: to quantify how accurately the local-anchor
surrogate can track the true SINR objective, and to reveal how this
accuracy scales with the snapshot budget and the displacement from the
anchor.
By combining the Lipschitz properties of the array response with matrix
concentration inequalities for the sample covariance, we derive local
approximation bounds that characterize the mismatch between the surrogate
objective and the true one.
We then compare the local-anchor model with a historical-average surrogate
that aggregates covariance information from multiple anchors and appears
attractive from a purely statistical perspective. However, we show that this
surrogate suffers from an intrinsic geometric bias, which makes it fundamentally
less accurate for TR updates.
These insights provide the theoretical foundation for the algorithmic design
developed in Section~IV.

\subsection{Theoretical Justification of the Proposed Surrogate }

To this end, we present several auxiliary lemmas to establish (i) the Lipschitz continuity
of the steering vectors and $\mathbf R(\mathbf x)$ with respect to $\mathbf x$, and
(ii) a concentration bound for the sample covariance $\hat{\mathbf R}(\mathbf{x})$ at $\mathbf x_i$.

\begin{Lemma}
	For any $\theta$ and $\mathbf x,\mathbf y\in\mathbb R^{N_r\times1}$, the following inequality holds:
	\begin{equation}
		\label{eq:a-Lip-short}
		\|\mathbf a(\mathbf x,\theta)-\mathbf a(\mathbf y,\theta)\|
		\le k|\sin\theta|\,\|\mathbf x-\mathbf y\|.
	\end{equation}
\end{Lemma}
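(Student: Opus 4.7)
The plan is to bound the inequality elementwise and then combine the bounds through the Euclidean norm. First, I would note that both steering vectors $\mathbf a(\mathbf x,\theta)$ and $\mathbf a(\mathbf y,\theta)$ are entry-wise complex exponentials as defined in \eqref{eq:a_theta}, so their $n$-th entries differ only in the phase argument $k x_n \sin\theta$ versus $k y_n \sin\theta$. This reduces the claim to controlling $|e^{jkx_n\sin\theta}-e^{jky_n\sin\theta}|$ for each index $n$.

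Second, I would invoke the elementary identity $|e^{j\alpha}-e^{j\beta}| = 2|\sin((\alpha-\beta)/2)|$ and combine it with the standard bound $|\sin u|\le |u|$ to obtain
\begin{equation}
\bigl|e^{jkx_n\sin\theta}-e^{jky_n\sin\theta}\bigr|
\le k|\sin\theta|\,|x_n-y_n|.
\end{equation}
Squaring and summing over $n=1,\ldots,N_r$ yields
\begin{equation}
\|\mathbf a(\mathbf x,\theta)-\mathbf a(\mathbf y,\theta)\|^{2}
\le k^{2}\sin^{2}\theta\sum_{n=1}^{N_r}(x_n-y_n)^{2}
= k^{2}\sin^{2}\theta\,\|\mathbf x-\mathbf y\|^{2},
\end{equation}
and taking the square root of both sides gives the desired Lipschitz bound.

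There is essentially no serious obstacle here: the only subtlety is choosing between the tight identity $|e^{j\alpha}-e^{j\beta}|=2|\sin((\alpha-\beta)/2)|$ and the slightly weaker but more direct mean-value inequality $|e^{j\alpha}-e^{j\beta}|\le|\alpha-\beta|$. Either route delivers the same Lipschitz constant $k|\sin\theta|$, which is actually tight in the small-displacement regime. Since the bound is entry-separable and $\|\cdot\|$ is the standard Euclidean norm, no cross-term argument or inner-product manipulation is needed, and the proof reduces to a clean three-line calculation.
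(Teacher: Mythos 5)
Your proof is correct and follows essentially the same route as the paper: both reduce to the entrywise bound $|e^{j\alpha}-e^{j\beta}|\le|\alpha-\beta|$ (the paper proves it via the integral representation $e^{ju}-e^{jv}=\int_v^u je^{jt}\,\mathrm{d}t$, you via the $2|\sin((\alpha-\beta)/2)|$ identity), then square, sum over $n$, and take square roots. The two derivations of the scalar inequality are interchangeable, so there is nothing to fix.
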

\begin{proof}
	Please refer to Appendix A. 
\end{proof}
\noindent
Building on this Lipschitz property of the steering vector, we next show
that the covariance matrix $\mathbf R(\mathbf x)$ is also globally
Lipschitz-continuous with respect to the APV.
\begin{Lemma}
	For all $\mathbf x,\mathbf y\in\mathbb R^{N_r\times1}$,
	\begin{equation}
		\label{eq:R-Lip-short}
		\|\mathbf R(\mathbf x)-\mathbf R(\mathbf y)\| \le L_R\|\mathbf x-\mathbf y\|
	\end{equation}
	holds, where 
	\begin{equation}
		L_R\!=\!{2k}{\sqrt{N_r}}\Big(\sigma_s^2\sqrt{L}\|\boldsymbol\alpha\|^2 \sqrt{\sum_{\ell=1}^{L}\sin^2\theta_\ell}+\sum_{i=1}^{I}\sigma_i^2|\zeta_i|^2|\sin\phi_i|\Big).
	\end{equation} 
\end{Lemma}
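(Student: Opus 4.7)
The plan is to reduce the operator-norm bound on $\mathbf R(\mathbf x)-\mathbf R(\mathbf y)$ to Lipschitz estimates on rank-one outer products of the steering-vector building blocks, and then invoke Lemma~1 together with Cauchy--Schwarz. First, using the expression~(\ref{eq:Rxx}), I would subtract $\mathbf R(\mathbf y)$ from $\mathbf R(\mathbf x)$ and observe that the noise term $\sigma_n^2\mathbf I_{N_r}$ cancels, so
\begin{equation*}
\mathbf R(\mathbf x)-\mathbf R(\mathbf y)=\sigma_s^2\big[\mathbf h_0(\mathbf x)\mathbf h_0(\mathbf x)^H-\mathbf h_0(\mathbf y)\mathbf h_0(\mathbf y)^H\big]+\sum_{i=1}^{I}\sigma_i^2\big[\mathbf g_i(\mathbf x)\mathbf g_i(\mathbf x)^H-\mathbf g_i(\mathbf y)\mathbf g_i(\mathbf y)^H\big].
\end{equation*}
The triangle inequality then reduces the task to bounding each rank-one difference individually.

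Next, I would apply the elementary identity $\mathbf u\mathbf u^H-\mathbf v\mathbf v^H=(\mathbf u-\mathbf v)\mathbf u^H+\mathbf v(\mathbf u-\mathbf v)^H$, which yields $\|\mathbf u\mathbf u^H-\mathbf v\mathbf v^H\|\le\|\mathbf u-\mathbf v\|\,(\|\mathbf u\|+\|\mathbf v\|)$ in spectral norm. So each summand is controlled by (i) the magnitude of the vectors themselves and (ii) their Lipschitz modulus in $\mathbf x$. For the magnitudes, each steering vector $\mathbf a(\cdot,\theta)$ has unit-modulus entries and hence norm $\sqrt{N_r}$; this immediately gives $\|\mathbf g_i(\mathbf x)\|=|\zeta_i|\sqrt{N_r}$, and a Cauchy--Schwarz step applied to $\mathbf h_0(\mathbf x)=\sum_{\ell}\alpha_\ell\mathbf a(\mathbf x,\theta_\ell)$ yields $\|\mathbf h_0(\mathbf x)\|\le\sqrt{L N_r}\,\|\boldsymbol\alpha\|$, uniformly in $\mathbf x$.

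For the Lipschitz estimates, Lemma~1 directly gives $\|\mathbf g_i(\mathbf x)-\mathbf g_i(\mathbf y)\|\le|\zeta_i|\,k|\sin\phi_i|\,\|\mathbf x-\mathbf y\|$. For $\mathbf h_0$, the linearity in $\boldsymbol\alpha$ combined with Lemma~1 gives a sum of $L$ terms, on which a second Cauchy--Schwarz step produces the bound $\|\mathbf h_0(\mathbf x)-\mathbf h_0(\mathbf y)\|\le k\|\boldsymbol\alpha\|\sqrt{\sum_{\ell=1}^{L}\sin^2\theta_\ell}\,\|\mathbf x-\mathbf y\|$; this is the step where I have to be careful to pair the $\sqrt{L}$ factor on $\|\mathbf h_0\|$ with a sum-of-squares on $\sin\theta_\ell$ rather than with the $\ell^1$ norm, since otherwise the constant $L_R$ in the statement will not match.

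Plugging the magnitude and Lipschitz bounds into the rank-one estimate, the signal contribution becomes $2\sigma_s^2 k\sqrt{N_r}\sqrt{L}\,\|\boldsymbol\alpha\|^2\sqrt{\sum_\ell\sin^2\theta_\ell}\,\|\mathbf x-\mathbf y\|$, and each jammer contributes $2\sigma_i^2 k\sqrt{N_r}\,|\zeta_i|^2|\sin\phi_i|\,\|\mathbf x-\mathbf y\|$. Summing and factoring out $2k\sqrt{N_r}\,\|\mathbf x-\mathbf y\|$ reproduces exactly the stated $L_R$. There is no real obstacle in this argument; the only bookkeeping point to watch is the dual use of Cauchy--Schwarz (once for the magnitude of $\mathbf h_0$ and once for its Lipschitz modulus), which must be carried out consistently to recover the factors $\sqrt{L}\,\|\boldsymbol\alpha\|^2$ and $\sqrt{\sum_\ell\sin^2\theta_\ell}$ in the advertised form.
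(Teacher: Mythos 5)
Your proposal is correct and follows essentially the same route as the paper's proof in Appendix B: the rank-one identity $\mathbf u\mathbf u^H-\mathbf v\mathbf v^H=(\mathbf u-\mathbf v)\mathbf u^H+\mathbf v(\mathbf u-\mathbf v)^H$, the uniform norm bounds $\|\mathbf h_0\|\le\sqrt{N_rL}\,\|\boldsymbol\alpha\|$ and $\|\mathbf g_i\|=|\zeta_i|\sqrt{N_r}$, and the Lipschitz moduli obtained from Lemma~1. Your two Cauchy--Schwarz steps are exactly the paper's Frobenius-norm bounds $\|\mathbf A(\mathbf x)\|_{\mathrm F}\|\boldsymbol\alpha\|$ and $\|\mathbf A(\mathbf x)-\mathbf A(\mathbf y)\|_{\mathrm F}\|\boldsymbol\alpha\|$ written out componentwise, and the resulting constant $L_R$ matches.
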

\begin{proof}
	Please see Appendix B. 
\end{proof}
\begin{Lemma}
	There exists universal constants $C,K>0$ such that, for every $T\in\mathbb N$, the following inequality: 
	\begin{equation}
		\label{eq:stat}
		\begin{aligned}
			\mathbb E \{\|\hat{\mathbf R}(\mathbf{x}_i)-\mathbf R(\mathbf x_i)\| \} \!
			\le \!
			C K^2
			\big(
			\sqrt{\frac{N_r}{T}}
			+
			\frac{N_r}{T}
			\big)
			\|\mathbf R(\mathbf x_i)\| \!
			\triangleq \! \varepsilon(T),
		\end{aligned}
	\end{equation}
	holds, and in particular $\varepsilon(T)\to 0$ as $T\to\infty$.
\end{Lemma}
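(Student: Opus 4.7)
The plan is to reduce the inequality to a standard concentration bound for sample covariance matrices of sub-Gaussian random vectors, applied to the interference-plus-noise snapshots $\{\mathbf p(\mathbf x_i,t)\}_{t=1}^T$. Note that at the fixed anchor $\mathbf x_i$ these snapshots are i.i.d.\ across $t$ (by the mutual independence of $\{s_i(t)\}_{i=1}^I$ and $\{\mathbf n(t)\}$ and their WSS property), with common covariance $\mathbf R_{i+n}(\mathbf x_i)$. Since the definition (\ref{eq:9}) uses the full received vector $\mathbf r(\mathbf x_i,t)$ rather than $\mathbf p(\mathbf x_i,t)$, I will absorb the desired signal component into the same framework: $\mathbf r(\mathbf x_i,t)$ is also i.i.d.\ zero-mean with covariance $\mathbf R(\mathbf x_i)$ given by (\ref{eq:Rxx}).

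First, I would whiten the samples by defining $\mathbf z(t)\triangleq \mathbf R(\mathbf x_i)^{-1/2}\mathbf r(\mathbf x_i,t)$, so that the $\mathbf z(t)$'s are i.i.d.\ isotropic ($\mathbb E[\mathbf z(t)\mathbf z(t)^H]=\mathbf I_{N_r}$) and the sample covariance satisfies
\begin{equation}
\hat{\mathbf R}(\mathbf x_i)-\mathbf R(\mathbf x_i)
=\mathbf R(\mathbf x_i)^{1/2}\!\left(\tfrac{1}{T}\textstyle\sum_{t=1}^T \mathbf z(t)\mathbf z(t)^H-\mathbf I_{N_r}\right)\mathbf R(\mathbf x_i)^{1/2}.
\end{equation}
Taking spectral norms and applying $\|ABA\|\le\|A\|^2\|B\|$ pulls out a factor of $\|\mathbf R(\mathbf x_i)\|$, reducing the problem to bounding $\mathbb E\|\hat{\boldsymbol\Sigma}-\mathbf I_{N_r}\|$ where $\hat{\boldsymbol\Sigma}$ is the sample covariance of the isotropic $\mathbf z(t)$'s.

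Next, I would establish that each $\mathbf z(t)$ is sub-Gaussian with norm $\|\mathbf z(t)\|_{\psi_2}\le K$ for some constant $K>0$. The noise term $\mathbf n(t)$ is complex Gaussian, hence automatically sub-Gaussian; for the symbol streams $\{s_0(t)\},\{s_i(t)\}$, standard communication alphabets (bounded QAM/PSK, Gaussian codebooks, etc.) are sub-Gaussian with universal constants, so a fixed linear combination of finitely many independent sub-Gaussian scalars through $\mathbf R(\mathbf x_i)^{-1/2}[\mathbf h_0(\mathbf x_i)\,\mathbf g_1(\mathbf x_i)\,\cdots\,\mathbf g_I(\mathbf x_i)]$ yields a sub-Gaussian vector whose $\psi_2$-norm can be uniformly controlled. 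Once this is in place, the matrix concentration inequality for sample covariances of sub-Gaussian vectors (Vershynin, \emph{High-Dimensional Probability}, Thm.\ 4.7.1, or equivalently the Koltchinskii--Lounici bound) directly yields
\begin{equation}
\mathbb E\|\hat{\boldsymbol\Sigma}-\mathbf I_{N_r}\|\le C K^2\!\left(\sqrt{\tfrac{N_r}{T}}+\tfrac{N_r}{T}\right),
\end{equation}
with the two terms arising respectively from the sub-Gaussian (variance-scale) regime and from the sub-exponential tail of $\mathbf z(t)\mathbf z(t)^H$. Combining this with the whitening step gives the stated bound $\varepsilon(T)$, and the convergence $\varepsilon(T)\to 0$ as $T\to\infty$ is immediate because both $\sqrt{N_r/T}$ and $N_r/T$ vanish for fixed $N_r$.

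The main obstacle I anticipate is the sub-Gaussianity hypothesis: the system model only specifies second-moment/WSS properties of $s_0(t)$ and $\{s_i(t)\}$, so a clean derivation requires explicitly invoking a mild sub-Gaussian assumption on the symbol alphabets (easily justified for bounded or Gaussian modulations but not literally implied by WSS alone). A secondary subtlety is handling the \emph{complex} isotropic case rather than the real one covered by textbook statements; this can be resolved either by vectorizing real and imaginary parts into a $2N_r$-dimensional real vector (adjusting $C$ and $K$ by universal constants) or by citing the complex version of the Koltchinskii--Lounici bound. Once these foundations are in place, the argument is modular and the displayed inequality follows directly.
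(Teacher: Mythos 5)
Your proposal is correct and follows essentially the same route as the paper's proof: both reduce the claim to Vershynin's sub-Gaussian sample-covariance concentration bound, handle the complex-to-real conversion by stacking real and imaginary parts into a $2N_r$-dimensional vector (the paper does this via $\mathbf M=[\mathbf I_{N_r}\ j\mathbf I_{N_r}]$ together with a Rayleigh-quotient argument showing $\|\boldsymbol\Sigma_{\mathrm R}\|\le\|\mathbf R(\mathbf x_i)\|$, which is the detail your ``adjust $C,K$ by universal constants'' remark glosses over), and explicitly invoke sub-Gaussianity of the symbol alphabets, which the paper likewise justifies by appealing to bounded modulation formats rather than deriving it from the WSS assumptions. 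Your whitening step is only a cosmetic reordering, since the general (non-isotropic) form of the theorem the paper cites is itself obtained by whitening.
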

\begin{proof}
	Please refer to Appendix C.
\end{proof}
Combining Lemmas~2-3 with the triangle inequality yields the following bound on the deviation between the sample covariance matrix $\hat{\mathbf R}(\mathbf{x}_i)$ and the true covariance matrix ${\mathbf R}(\mathbf{x})$:
\begin{equation}
	\label{eq:uniform-short}
	\begin{aligned}
		&\mathbb E \{\|\mathbf R(\mathbf x)-\hat{\mathbf R}(\mathbf{x}_i)\| \}\\
				&\le
		\|\mathbf R(\mathbf x)-{\mathbf R}(\mathbf{x}_i)\|+	\mathbb E\{\|\mathbf R(\mathbf{x}_i)-\hat{\mathbf R}(\mathbf{x}_i)\|\}
		\\ &\le L_R \|\mathbf x-\mathbf{x}_i\| +\varepsilon(T).
	\end{aligned}
\end{equation}

{{
	To quantify the discrepancy between the surrogate objective
	$\hat g(\mathbf x)$ and the true objective $g(\mathbf x)$, we next
	establish a high-probability local error bound, conditional on the
	fixed quasi-static channel parameters and the current anchor
	$\mathbf x_i$.
		
	\begin{Lemma}
		\label{lem:surrogate_error_bound}
		Define the sample covariance
		estimation error at the current anchor $\mathbf x_i$ as
	$
			\boldsymbol\Delta_i
			\triangleq
			\hat{\mathbf R}(\mathbf x_i)
			-
			\mathbf R(\mathbf x_i).
	$
		Let $\varepsilon(T)$ be the covariance estimation bound given in
		Lemma~3. For any sufficiently large $T$ satisfying
		$0<\varepsilon(T)\le\sigma_n^2/4$, define the event
		\begin{equation}
			\label{eq:event_Ei}
			\mathcal E_i
			\triangleq
			\left\{
			\big\|
			\boldsymbol\Delta_i
			\big\|
			\le
			\sqrt{\sigma_n^2\varepsilon(T)}
			\right\}.
		\end{equation}
		Then,
		on the event $\mathcal E_i$, the following inequality holds
		simultaneously for all $\mathbf x\in\mathcal X$:
		\begin{equation}
			\label{eq:gap-global}
			\begin{aligned}
				\big|
				g(\mathbf x)-\hat g(\mathbf x)
				\big|
				&\le
				C_R
				\left(
				L_R\|\mathbf x-\mathbf x_i\|
				+
				\sqrt{\sigma_n^2\varepsilon(T)}
				\right)
				\triangleq
				\delta(\mathbf x),
			\end{aligned}
		\end{equation}
		where
$
			C_R
			\triangleq
			{
				2N_rL\|\boldsymbol\alpha\|^2
			}/{
				\sigma_n^4
			}.
$	Moreover, 		
\begin{equation}
		\label{eq:event_Ei_probability}
		\mathbb P(\mathcal E_i)
		\ge
		1-
		\sqrt{
			\frac{\varepsilon(T)}{\sigma_n^2}
		},
	\end{equation}
		where $\mathbb P(\mathcal E_i)\to1$ as
		$T\to\infty$.
	\end{Lemma}
	
	\begin{proof}
		Please see Appendix~\ref{app:surrogate_error_bound}.
	\end{proof}

	}}

	Lemma~\ref{lem:surrogate_error_bound} shows that, with probability at least
$1-\sqrt{\varepsilon(T)/\sigma_n^2}$, $g(\mathbf x)$ lies in a tube of width
$2\delta(\mathbf x)$ centered at $\hat g(\mathbf x)$, and the tube is
{narrowest near $\mathbf x_i$} because $\delta(\mathbf x)$ grows (approximately) linearly with
$\|\mathbf x-\mathbf x_i\|$. A toy example is depicted in Fig.~\ref{surragate} which visualizes this effect. 
Specifically, the light blue shaded region represents the interval $[\hat g(\mathbf x)-\delta(\mathbf x),\hat g(\mathbf x)+\delta(\mathbf x)]$.
Its width depends on $\|\mathbf x-\mathbf x_i\|$, i.e., when $\mathbf x$ stays near $\mathbf x_i$, the band is narrow and $\hat g(\mathbf x)$ closely tracks $g(\mathbf x)$; as $\mathbf x$ moves farther away, the band becomes wider, which indicates larger approximation error.
Consequently, when we optimize $\hat g(\mathbf x)$ using an
algorithm that keeps iterates within the neighborhood of $\mathbf x_i$, the predicted increase in the surrogate dominates the
tube width and thus may transfer to an actual increase of $g(\mathbf x)$. 
This observation explains why maximizing $\hat g(\mathbf x)$ is a principled proxy for improving the true objective $g(\mathbf x)$, which
is intuitive because, by Lemma~3 and Lemma~4, $\hat{\mathbf R}(\mathbf x_i)$ is an asymptotically consistent estimate of $\mathbf R(\mathbf x_i)$ (its estimation error vanishes as $T$ grows), and the gap between $\hat g(\mathbf x)$ and $g(\mathbf x)$ increases approximately linearly with $\|\mathbf x-\mathbf x_i\|$. 
Hence, as long as the iterates remain reasonably close to $\mathbf x_i$, the gains obtained by optimizing $\hat g(\mathbf x)$ would be very close to that by $g(\mathbf x)$.


{
In practice, under the two-timescale frame structure shown in Fig.~\ref{twotimescale}, the covariance \(\mathbf R(\mathbf x_i)\) is re-estimated at the current antenna-position anchor \(\mathbf x_i\) over each \(T\)-snapshot block. At the end of each block, the anchor is updated to \(\mathbf x_{i+1}\), which recenters the surrogate function  $\hat g(\mathbf x)$, and further tightens the local tube, thereby maintaining high surrogate fidelity along the trajectory.}

Having established the local fidelity of the local-anchor surrogate \( \hat{\mathbf R}(\mathbf x_i)\),
a natural question is whether one can further improve robustness by
aggregating covariance information from multiple anchors.
We address this question next by analyzing the historical-average
surrogate $ \hat{\mathbf R}_{\mathrm{avg}} \triangleq \frac{1}{M}\sum_{m=1}^{M}\hat{\mathbf R}(\mathbf x_m)$.

\subsection{Geometric Bias of Historical-Average Surrogates}

\label{subsec:hist_avg}

A seemingly natural extension of the local-anchor strategy is to
exploit all previously visited anchors and build a surrogate
objective from the arithmetic mean of their (blockwise) sample
covariances, i.e., from the historical average
$\hat{\mathbf R}_{\mathrm{avg}}$.  Intuitively,
such averaging may look attractive because it reduces the variance of
the sample covariance estimator and appears to provide a more “stable’’
model for optimization.

In practice, both the local-anchor surrogate \( \hat{\mathbf R}(\mathbf x_i)\) and the historical-average 
surrogate $\hat{\mathbf R}_{\mathrm{avg}}$ are implemented using the blockwise sample covariances
$\{\hat{\mathbf R}(\mathbf x_m)\}$ rather than the unknown population
covariances $\{\mathbf R(\mathbf x_m)\}$.  
However, Lemma~3 shows that, for each anchor, the sample covariance
$\hat{\mathbf R}(\mathbf x_m)$ deviates from $\mathbf R(\mathbf x_m)$ by a
statistical perturbation whose expected spectral norm vanishes as the
snapshot budget $T$ increases.  
This suggests that the structural difference between the two surrogates, i.e., $\hat{\mathbf R}(\mathbf x_i)$ and $\hat{\mathbf R}_{\mathrm{avg}}$,
primarily comes from how they combine information across different MA
geometries, rather than from the sampling noise itself.  
To make this geometric distinction explicit, we first compare the two
surrogates using true covariance matrices, and then assess
the implications for implementations based on sample covariances.

We now formalize the geometric comparison between the
historical-average model and the local-anchor model for the true covariance matrices.
Let $\mathbf x_\star\in\mathcal X$ denote the current antenna-position anchor and define\footnote{With a slight abuse of notation, we write $\mathbf R_0$ and
	$\mathbf R_\delta$ in place of $\mathbf R(\mathbf x_\star)$ and
	$\mathbf R(\mathbf x_\star+\boldsymbol\delta)$, respectively. Throughout
	the theorem, the anchor $\mathbf x_\star$ is fixed, so $\mathbf R_0$ can
	be viewed as a constant (local-anchor) covariance surrogate, whereas
	$\mathbf R_\delta$ denotes the true covariance evaluated at the displaced
	position $\mathbf x_\star+\boldsymbol\delta$. The dependence on
	$\mathbf x_\star$ and $\boldsymbol\delta$ can be made explicit as
	$\mathbf R(\mathbf x_\star)$ and $\mathbf R(\mathbf x_\star+\boldsymbol\delta)$
	whenever needed.}
	\begin{equation}
		\begin{aligned}
			\mathbf R_0
			&\triangleq \mathbf R(\mathbf x_\star), \quad \\
			\mathbf R_\delta
			&\triangleq
			\mathbf R(\mathbf x_\star+\boldsymbol\delta),
			\;
			\boldsymbol\delta\in\mathbb R^{N_r}.
		\end{aligned}
	\end{equation}
	Suppose that $M$ (not necessarily distinct)
	anchors $\{\mathbf x_m\}_{m=1}^{M}\subset\mathcal X$ have been selected and define the
	historical average covariance as
	\begin{equation}
		\begin{aligned}
			\widetilde{\mathbf R}
			&\triangleq
			\frac{1}{M}
			\sum_{m=1}^{M}\mathbf R(\mathbf x_m).
		\end{aligned}
	\end{equation}
	Then, we can obtain the following theorem.

{{\begin{Theorem}
			\label{thm:avg_vs_local}
			Assume that the historical-average covariance does not coincide
			with the local-anchor covariance at $\mathbf x_\star$, i.e.,
$
				\widetilde{\mathbf R}-\mathbf R_0
				\neq \mathbf 0.
$
			Then, there exist constants $c_0(\mathbf x_\star)>0$ and
			$\rho_0(\mathbf x_\star)>0$ such that, for all
			$\boldsymbol\delta\in\mathbb R^{N_r}$ satisfying
			$\|\boldsymbol\delta\|\le\rho_0(\mathbf x_\star)$, we have
			\begin{equation}
				\begin{aligned}
					\big\|
					\widetilde{\mathbf R}^{-1}
					-\mathbf R_\delta^{-1}
					\big\|
					&\ge
					\big\|
					\mathbf R_0^{-1}
					-\mathbf R_\delta^{-1}
					\big\|
					+c_0(\mathbf x_\star).
				\end{aligned}
			\end{equation}
		\end{Theorem}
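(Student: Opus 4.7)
The plan is to reduce the theorem to establishing a strictly positive ``geometric bias'' at $\boldsymbol\delta=0$ and then propagate it to a neighborhood via Lipschitz continuity of $\mathbf R^{-1}(\cdot)$ together with the reverse triangle inequality. Define
$c_1 \triangleq \|\widetilde{\mathbf R}^{-1}-\mathbf R_0^{-1}\|$.
The target inequality will follow from showing $c_1>0$, setting $c_0(\mathbf x_\star)\triangleq c_1/2$, and choosing $\rho_0(\mathbf x_\star)$ small enough that $\|\mathbf R_0^{-1}-\mathbf R_\delta^{-1}\|\le c_1/4$ on the ball of radius $\rho_0$.

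First, I would verify the anchor gap $c_1>0$. The key structural observation is that a pure translation $\mathbf x\mapsto\mathbf x+c\mathbf 1$ multiplies every steering vector $\mathbf a(\mathbf x,\theta)$ by the common scalar $e^{jkc\sin\theta}$, so each rank-one jammer summand $|\zeta_i|^2\mathbf a\mathbf a^H$ is invariant under such translations. By contrast, when $\mathbf x_{m_0}-\mathbf x_\star$ is not proportional to $\mathbf 1$, the pairwise phase differences $k(x_p-x_q)\sin\phi_i$ strictly change for at least one antenna pair and one jammer direction, forcing $\mathbf R(\mathbf x_{m_0})\neq\mathbf R_0$. Combining this with the analyticity of the map $\mathbf x\mapsto\mathbf R(\mathbf x)$, I would invoke a Fourier/genericity argument---inspecting, for example, the coefficients of the distinct spatial frequencies $\{\sin\phi_i\}_{i=1}^{I}$ on a fixed antenna pair---to rule out accidental cancellation in the finite average $\widetilde{\mathbf R}-\mathbf R_0=\tfrac{1}{M}\sum_m(\mathbf R(\mathbf x_m)-\mathbf R_0)$, yielding $\widetilde{\mathbf R}\neq\mathbf R_0$ and hence $c_1>0$ by invertibility of both matrices.

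Next, I would quantify the local perturbation of $\mathbf R_\delta^{-1}$ using the identity $\mathbf R_0^{-1}-\mathbf R_\delta^{-1}=\mathbf R_0^{-1}(\mathbf R_\delta-\mathbf R_0)\mathbf R_\delta^{-1}$, the lower bound $\mathbf R(\mathbf x)\succeq\sigma_n^2\mathbf I$ (so that $\|\mathbf R^{-1}\|\le\sigma_n^{-2}$), and Lemma~2, which together yield
\[
\eta(\boldsymbol\delta)\triangleq\|\mathbf R_0^{-1}-\mathbf R_\delta^{-1}\|\le \sigma_n^{-4}L_R\|\boldsymbol\delta\|.
\]
Setting $\rho_0(\mathbf x_\star)\triangleq \sigma_n^4 c_1/(4L_R)$ ensures $\eta(\boldsymbol\delta)\le c_1/4$ whenever $\|\boldsymbol\delta\|\le\rho_0$. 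The reverse triangle inequality then gives $\|\widetilde{\mathbf R}^{-1}-\mathbf R_\delta^{-1}\|\ge c_1-\eta(\boldsymbol\delta)$, so that
\[
\|\widetilde{\mathbf R}^{-1}-\mathbf R_\delta^{-1}\|-\|\mathbf R_0^{-1}-\mathbf R_\delta^{-1}\|\ge c_1-2\eta(\boldsymbol\delta)\ge c_1/2=c_0(\mathbf x_\star),
\]
which is exactly the claim. The ``almost surely'' qualifier is inherited only through later implementations that replace $\widetilde{\mathbf R}$ by its sample-covariance counterpart; Lemma~3 (applied blockwise) transfers the bound to the empirical object with probability one as $T\to\infty$.

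The main obstacle is the rigorous verification of $c_1>0$: a single ``bad'' anchor $m_0$ does not a priori prevent pathological cancellations in $\sum_m(\mathbf R(\mathbf x_m)-\mathbf R_0)$ when several anchors simultaneously contain non-collinear components. Resolving this requires exploiting the Fourier structure of the steering-vector outer products and arguing that the coefficients of the distinct spatial frequencies induced by the jammer and multipath angles cannot jointly vanish, except on a measure-zero set of anchor configurations excluded by the nondegeneracy hypothesis. Once this structural nonvanishing is secured, the remaining geometric steps are routine continuity estimates.
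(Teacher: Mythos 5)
Your outer skeleton is sound and matches the paper's: decompose via the reverse triangle inequality, lower-bound the anchor gap $\|\widetilde{\mathbf R}^{-1}-\mathbf R_0^{-1}\|$ by a positive constant, upper-bound $\|\mathbf R_0^{-1}-\mathbf R_\delta^{-1}\|$ linearly in $\|\boldsymbol\delta\|$ using the resolvent identity, the bound $\mathbf R(\mathbf x)\succeq\sigma_n^{2}\mathbf I$ and Lemma~2, and shrink $\rho_0$ accordingly. Defining $c_0$ directly as half of $\|\widetilde{\mathbf R}^{-1}-\mathbf R_0^{-1}\|$ is a legitimate simplification of the paper's route (which instead passes through an explicit lower bound of the form $\frac{1}{\Lambda}\frac{A_0}{1+A_0}$ with $A_0=\|\mathbf R_0^{-1/2}(\widetilde{\mathbf R}-\mathbf R_0)\mathbf R_0^{-1/2}\|$); both give a strictly positive constant once the bias matrix is nonzero.

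The genuine gap is exactly the step you flag as ``the main obstacle'': proving $\widetilde{\mathbf R}\neq\mathbf R_0$ almost surely. You leave this as an unexecuted ``Fourier/genericity argument,'' and your sketch of how to close it points in the wrong direction on two counts. First, you propose excluding a ``measure-zero set of anchor configurations,'' but the anchors $\{\mathbf x_m\}$ are deterministic design points, and the hypothesis only guarantees one non-translation anchor; the randomness that actually rescues the claim lives in the channel, namely the continuously distributed AoAs $\varphi_q$ and path powers $c_q$. Second, you do not resolve the possibility of cancellation in the average over $m$. The paper's key device here is a strict-convexity (Jensen equality) argument: every rank-one term $\mathbf R_q(\mathbf x_m)=\mathbf a(\mathbf x_m,\varphi_q)\mathbf a(\mathbf x_m,\varphi_q)^{H}$ lies on the Frobenius sphere of radius $N_r$, so $\frac{1}{M}\sum_m\mathbf R_q(\mathbf x_m)=\mathbf R_q(\mathbf x_\star)$ forces \emph{all} $\mathbf R_q(\mathbf x_m)$ to coincide with $\mathbf R_q(\mathbf x_\star)$; the non-translation anchor $m_0$ then pins $\sin\varphi_q$ to a countable set, which has probability zero under the continuous AoA model. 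Cancellation across paths, i.e.\ $\sum_q c_q\boldsymbol\Delta_q=\mathbf 0$ with some $\boldsymbol\Delta_q\neq\mathbf 0$, is then killed because the kernel of a nonzero linear map is a proper subspace, hence null for the absolutely continuous power vector $\mathbf c$. Relatedly, your attribution of the ``almost surely'' qualifier to the eventual replacement of $\widetilde{\mathbf R}$ by sample covariances is a misreading: the theorem concerns the true covariances, and the probabilistic qualifier is over the channel realization, not the snapshot noise. Without the convexity-plus-measure-zero argument (or an equivalent), the central claim $c_0(\mathbf x_\star)>0$ is unproven.
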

		
		\begin{proof}
			Please see Appendix~\ref{app:avg_vs_local}.
\end{proof}}}

%

Theorem~\ref{thm:avg_vs_local} shows that, in a neighborhood of the current
anchor $\mathbf x_\star$, the inverse of the historical-average covariance
matrix, $\widetilde{\mathbf R}^{-1}$, is uniformly farther (measured by the spectral norm) from
the inverse covariance at any nearby position, i.e., 
$\mathbf R(\mathbf x_\star+\boldsymbol\delta)^{-1}$, than the local-anchor
inverse covariance $\mathbf R_0^{-1} = \mathbf R(\mathbf x_\star)^{-1}$.
More precisely, for all sufficiently small perturbations
$\boldsymbol\delta$, the distance
$\|\widetilde{\mathbf R}^{-1}-\mathbf R_\delta^{-1}\|$ exceeds
$\|\mathbf R_0^{-1}-\mathbf R_\delta^{-1}\|$ by at least a fixed margin
$c_0(\mathbf x_\star)>0$.  
This implies that when the MA array moves slightly away from the current
anchor, the local-anchor model $\mathbf R_0^{-1}$ always provides a better
local approximation to $\mathbf R_\delta^{-1}$ than the historical-average
model $\widetilde{\mathbf R}^{-1}$, which reveals an intrinsic geometric bias
of the historical-average surrogate around $\mathbf x_\star$.
Thus the local-anchor surrogate is naturally more compatible with
TR updates, whose effectiveness critically relies on an accurate
local model of $\mathbf R(\mathbf x)^{-1}$ along small position changes.

{{
		Although Theorem~\ref{thm:avg_vs_local} is formulated in terms of
		the population covariance matrices, the surrogates used in practice
		are constructed from blockwise sample covariances. For a fixed finite
		set of historical anchors,  we define
		\begin{equation}
			\label{eq:sample_surrogate_decomposition}
			\begin{aligned}
				\hat{\mathbf R}_{\mathrm{avg}}
				&\triangleq
				\frac{1}{M}
				\sum_{m=1}^{M}
				\hat{\mathbf R}(\mathbf x_m)=
				\widetilde{\mathbf R}
				+
				\widetilde{\boldsymbol\Delta},
			\end{aligned}
		\end{equation}
		where
$
			\widetilde{\boldsymbol\Delta}
			\triangleq
			\frac{1}{M}
			\sum_{m=1}^{M}
			\boldsymbol\Delta_m
$.
		Thus, the local-anchor covariance $\hat{\mathbf R}(\mathbf x_\star)$  contains the single-block
		statistical perturbation
				$\boldsymbol\Delta_\star
		\triangleq
		\hat{\mathbf R}(\mathbf x_\star)-\mathbf R_0
		$, whereas the
		historical-average covariance contains the averaged perturbation
		$\widetilde{\boldsymbol\Delta}$.
		By Lemma~3, let $\varepsilon_\star(T)$ and
		$\{\varepsilon_m(T)\}_{m=1}^{M}$ be nonnegative quantities satisfying
$
				\mathbb E
				\left[
				\big\|
				\boldsymbol\Delta_\star
				\big\|
				\right]
				\le
				\varepsilon_\star(T),
				\mathbb E
				\left[
				\big\|
				\boldsymbol\Delta_m
				\big\|
				\right]
				\le
				\varepsilon_m(T),
				m=1,\ldots,M,
$
		where
		$\varepsilon_\star(T)\to0$ and
		$\varepsilon_m(T)\to0$ as $T\to\infty$ for every fixed $m$.
		We further define
$
			\widetilde{\varepsilon}_M(T)
			\triangleq
			\frac{1}{M}
			\sum_{m=1}^{M}
			\varepsilon_m(T).
$
With the above notation, we next examine whether the comparison in
Theorem~\ref{thm:avg_vs_local} remains valid for
$\hat{\mathbf R}(\mathbf x_\star)$ and
$\hat{\mathbf R}_{\mathrm{avg}}$.

\begin{Theorem}
	\label{thm:sample_gap_stability}
	Assume that
	$\widetilde{\mathbf R}\neq\mathbf R_0$, and let
	$c_0(\mathbf x_\star)>0$ and
	$\rho_0(\mathbf x_\star)>0$ be the corresponding constants
	in Theorem~\ref{thm:avg_vs_local}. Then define
	\begin{equation}
		\label{eq:tau0_definition}
		\tau_0(\mathbf x_\star)
		\triangleq
		\min
		\left\{
		\frac{\sigma_n^2}{2},
		\frac{
			c_0(\mathbf x_\star)\sigma_n^4
		}{8}
		\right\}.
	\end{equation}
	Consider the small-perturbation event
	\begin{equation}
		\label{eq:sample_good_event}
		\mathcal E
		\triangleq
		\left\{
		\big\|
		\boldsymbol\Delta_\star
		\big\|
		\le
		\tau_0(\mathbf x_\star),
		\
		\big\|
		\widetilde{\boldsymbol\Delta}
		\big\|
		\le
		\tau_0(\mathbf x_\star)
		\right\}.
	\end{equation}
	Then, on the event $\mathcal E$, 
	the following inequality holds simultaneously for all
	$\boldsymbol\delta\in\mathbb R^{N_r}$ satisfying
	$\|\boldsymbol\delta\|\le\rho_0(\mathbf x_\star)$:
	\begin{equation}
		\label{eq:sample_strict_gap}
		\begin{aligned}
			\big\|
			\hat{\mathbf R}_{\mathrm{avg}}^{-1}
			-
			\mathbf R_\delta^{-1}
			\big\|
			&\ge
			\big\|
			\hat{\mathbf R}(\mathbf x_\star)^{-1}
			-
			\mathbf R_\delta^{-1}
			\big\|
			+
			\frac{c_0(\mathbf x_\star)}{2}.
		\end{aligned}
	\end{equation}
	The probability of the sufficient event $\mathcal E$ satisfies
	\begin{equation}
		\label{eq:sample_good_event_probability}
		\mathbb P(\mathcal E)
		\ge
		1
		-
		\frac{
			\varepsilon_\star(T)
			+
			\widetilde{\varepsilon}_M(T)
		}{
			\tau_0(\mathbf x_\star)
		}.
	\end{equation}
	Consequently, for every fixed finite $M$,
	$\mathbb P(\mathcal E)\to1$ as $T\to\infty$.
\end{Theorem}

\begin{proof}
	Please see Appendix~\ref{app:sample_gap_stability}.
\end{proof}

Theorem~\ref{thm:sample_gap_stability} shows that, throughout the
neighborhood
$\|\boldsymbol\delta\|\le\rho_0(\mathbf x_\star)$, the distance
between the historical-average inverse covariance
$\hat{\mathbf R}_{\mathrm{avg}}^{-1}$ and the true inverse covariance
$\mathbf R_\delta^{-1}$, i.e., $			\big\|
\hat{\mathbf R}_{\mathrm{avg}}^{-1}
-
\mathbf R_\delta^{-1}
\big\|$, is larger than the corresponding distance
between the local-anchor inverse covariance
$\hat{\mathbf R}(\mathbf x_\star)^{-1}$ and
$\mathbf R_\delta^{-1}$, i.e., $	\big\|
\hat{\mathbf R}(\mathbf x_\star)^{-1}
-
\mathbf R_\delta^{-1}
\big\|$, by at least
$c_0(\mathbf x_\star)/2$, with probability lower bounded by
\eqref{eq:sample_good_event_probability}. This probability approaches
one as $T\to\infty$.

To relate this geometric result to the objective
functions $\hat g(\mathbf x)$ evaluated in practice, we consider a candidate position
$\mathbf x=\mathbf x_\star+\boldsymbol\delta$ satisfying
$\|\boldsymbol\delta\|\le\rho_0(\mathbf x_\star)$, and the sample-based local-anchor and historical-average
surrogate objectives are defined as
$
		\hat g_{\mathrm{loc}}(\mathbf x)
		\triangleq
		\mathbf h_0(\mathbf x)^H
		\hat{\mathbf R}(\mathbf x_\star)^{-1}
		\mathbf h_0(\mathbf x),
		\hat g_{\mathrm{avg}}(\mathbf x)
		\triangleq
		\mathbf h_0(\mathbf x)^H
		\hat{\mathbf R}_{\mathrm{avg}}^{-1}
		\mathbf h_0(\mathbf x),
$
respectively.
Since the true objective at
$\mathbf x=\mathbf x_\star+\boldsymbol\delta$ is
$g(\mathbf x)
=\mathbf h_0(\mathbf x)^H
\mathbf R_\delta^{-1}
\mathbf h_0(\mathbf x)$, the two approximation errors satisfy
\begin{equation}
	\label{eq:sample_based_objective_error_bounds1}
	\begin{aligned}
		\big|
		\hat g(\mathbf x)
		-
		g(\mathbf x)
		\big|
		&\le
		\big\|
		\hat{\mathbf R}(\mathbf x_\star)^{-1}
		-
		\mathbf R_\delta^{-1}
		\big\|
		\big\|
		\mathbf h_0(\mathbf x)
		\big\|^2,
	\end{aligned}
\end{equation}
\begin{equation}
	\label{eq:sample_based_objective_error_bounds2}
	\begin{aligned}
		\big|
		\hat g_{\mathrm{avg}}(\mathbf x)
		-
		g(\mathbf x)
		\big|
		&\le
		\big\|
		\hat{\mathbf R}_{\mathrm{avg}}^{-1}
		-
		\mathbf R_\delta^{-1}
		\big\|
		\big\|
		\mathbf h_0(\mathbf x)
		\big\|^2.
	\end{aligned}
\end{equation}
Therefore, with probability lower bounded by
\eqref{eq:sample_good_event_probability}, the error bound for
$\hat g_{\mathrm{avg}}(\mathbf x)$ exceeds that for
$\hat g(\mathbf x)$ by at least
$c_0(\mathbf x_\star)\|\mathbf h_0(\mathbf x)\|^2/2$.
The uniformly smaller upper error bound of
$\hat g(\mathbf x)$ provides a theoretical
justification for using the current-anchor sample covariance
$\hat{\mathbf R}_{\mathrm{loc}}$ in local TR updates. This conclusion
is also consistent with the numerical results reported in
Section~V.

}}

\section{Proposed Algorithm}

Our goal in this section is to design an efficient algorithm to solve problem \eqref{P1}. The main challenges are (i) it is highly nonconvex 
as the optimization variable $\mathbf x$ appears in the exponential terms of 
each channel coefficient, 
and (ii) although the mismatch between the surrogate objective $\hat g(\mathbf x)$ and the true objective $ g(\mathbf x)$ is bounded, it is still non-negligible for a practical value of $T$. 
These observations motivate us to use a TR strategy that explicitly controls the step size so that each update remains within a neighborhood of $\mathbf x_i$ where the surrogate is reliable. 
We therefore propose a low-complexity iterative PTRSO algorithm, which at each iteration maximizes a quadratic surrogate function within a TR, then projects the trial point onto the feasible set $\mathcal X$, and applies a ratio test to accept the step and adjust the TR radius until a  stationary solution is found.

\subsection{PTRSO Algorithm}
We first outline the workflow of the TR strategy used by PTRSO and show how the structure of $\hat g(\mathbf x)$ yields a low-complexity implementation.

Since the surrogate \(\hat g(\mathbf x)\) is smooth, we adopt a local quadratic approximation to capture both its slope and curvature information while keeping the subproblem tractable. Specifically, for a trial step \(\mathbf x_p\) of the current $k$-th iteration, we use a local quadratic model by Taylor expansion of \(\hat g(\mathbf x)\) at \(\mathbf x^k\), i.e.,
\begin{equation}
	\label{eq:qk}
	q_k(\mathbf x_p)= \hat g(\mathbf x^k) + \mathbf g_k^{\!T}\mathbf x_p+\tfrac12\,\mathbf x_p^{T}\mathbf H_k\mathbf x_p,
\end{equation}
where $\mathbf g_k=\nabla \hat g(\mathbf x^k)$ and $ \mathbf H_k=\nabla^2 \hat g(\mathbf x^k)$. 
Notably, the gradient and Hessian both admit closed-form expressions, which are given by 
\begin{equation}
	\label{eq:grad}
	\nabla \hat g(\mathbf x)
	= 2\Re\left\{\mathbf J(\mathbf x)^{\!H}\,
	\hat{\mathbf R}(\mathbf x_i)^{-1}\,\mathbf h_0(\mathbf x)\right\},
\end{equation}
and
\begin{equation}
	\label{eq:hess}
	\begin{aligned}
		\nabla^2 &  \hat g(\mathbf x)
		= 2\Re\Big\{
		\mathbf J(\mathbf x)^{\!H}\hat{\mathbf R}(\mathbf x_i)^{-1}\mathbf J(\mathbf x) \\
		&
		-\,k^2\,\operatorname{diag}\!\big(\operatorname{diag}(c(x_1),\ldots,c(x_{N_r}))^{\!H}
		\,\hat{\mathbf R}(\mathbf x_i)^{-1}\mathbf h_0(\mathbf x)\big)
		\Big\},
	\end{aligned}
\end{equation}
respectively, where $\mathbf J(\mathbf x) \! = \! \frac{\partial \mathbf h_0(\mathbf x)}{\partial \mathbf x}
=\operatorname{diag}\{b(x_1),\ldots,b(x_{N_r})\}$ with
$b(x_n)=\sum_{\ell=1}^{L} j k \alpha_\ell \sin\theta_\ell\,e^{j k x_n \sin\theta_\ell}$, 
and $c(x_n)=\sum_{\ell=1}^{L}\alpha_\ell\sin^2\theta_\ell\,e^{j k x_n \sin\theta_\ell}$.
As can be observed, both (\ref{eq:grad}) and (\ref{eq:hess}) share a simple structure, i.e.,  a diagonal term multiplied by $\hat{\mathbf R}^{-1}(\mathbf x_i)$ and another diagonal term, minus a diagonal correction. 
Leveraging this structure, the proposed algorithm exploits Hessian-vector products to avoid explicit Hessian construction and factorization. 
	In particular, the surrogate Hessian takes the form $\mathrm{diag}(\cdot)\hat{\mathbf R}(\mathbf x_i)^{-1}\mathrm{diag}(\cdot)$ minus a diagonal term, so each Hessian-vector product reduces to one multiplication by $\hat{\mathbf R}(\mathbf x_i)^{-1}$ followed by elementwise scalings. 
	As a result, the per-iteration complexity scales as $\mathcal O(N_r^2)$, instead of the $\mathcal O(N_r^3)$ cost caused by forming and decomposing a dense Hessian. This makes the proposed method well suited for large arrays, and the detailed complexity analysis is provided in Section~V. 

Then, based on the above quadratic model (\ref{eq:qk}) and using a given radius $\Delta_k>0$, we propose to solve the following TR subproblem in each iteration of the proposed PTRSO algorithm: 
\begin{equation}
	\label{eq:TR-sub}
	\max_{\|\mathbf x_p\|\le\Delta_k}\; q_k(\mathbf x_p).
\end{equation}
The above subproblem can be efficiently handled by the Steihaug-conjugate-gradient (CG) method \cite{book1}, which only requires Hessian-vector products and thus naturally fits the above structure. 
Meanwhile, the TR radius $\Delta_k$ limits the step length and maintains the iterate within a neighborhood where the surrogate model remains reliable.
After computing a candidate step $\mathbf x_p^k$,
we form the next trial point by $\mathbf x_{\mathrm{trial}}=\mathbf x^k+\mathbf x_p^k$ and ensures the feasibility of problem (\ref{P1}) by further applying a simple Euclidean projection step onto the linear inequality set $\mathcal X$, i.e., 
\begin{equation}
	\label{PAVA1}
\mathbf x^{\mathrm{proj}} \triangleq\operatorname{Proj}_{\mathcal X}(\mathbf x_{\mathrm{trial}})=\arg\min_{\mathbf x \in \mathcal X}\ \|\mathbf x-\mathbf x_{\mathrm{trial}}\|^2.
\end{equation}
Note that this projection step admits a linear complexity implementation. 
In particular, problem~\eqref{PAVA1} 
is equivalent to the following bounded isotonic regression problem under the variable change of $y_n=x_n-n d$:
\begin{equation}
	\label{PAVA}
	\begin{aligned}
\min_{\mathbf y} &\quad \sum_{n=1}^{N_r}\big(y_n-(x_{\mathrm{trial},n}-n d)\big)^2 \\
\text{s.t.}&\quad 
y_1\le y_2\le\cdots\le y_{N_r},\ \ -d\le y_i\le D_x-N_r d,\;\forall i,
\end{aligned}
\end{equation}
and it can be solved by the classical pool adjacent violators algorithm (PAVA) in $\mathcal O(N_r)$ time \cite{PAVA}, after which the solution can be recovered by $x_n^{\mathrm{proj}}=y_n+n d, \forall n$.

After obtaining the Steihaug-CG step $\mathbf x_p^k$, we first form
the projected TR step
\begin{equation}
	\label{eq:projected_tr_step}
	\mathbf d_{\mathrm{TR}}^k
	\triangleq
	\operatorname{Proj}_{\mathcal X}
	\big(
	\mathbf x^k+\mathbf x_p^k
	\big)
	-\mathbf x^k.
\end{equation}
Since the projection may reduce the model increase provided by
$\mathbf x_p^k$,  we additionally construct the feasible Cauchy step
$
	\mathbf d_{\mathrm C}^k
	\triangleq
	\operatorname{Proj}_{\mathcal X}
	\big(
	\mathbf x^k+\alpha_k\mathbf g_k
	\big)
	-\mathbf x^k,
$ where the step size
$0<\alpha_k\le\Delta_k/\|\mathbf g_k\|$
is selected by backtracking such that
\begin{equation}
	\label{eq:cauchy_model_increase}
	q_k(\mathbf d_{\mathrm C}^k)-q_k(\mathbf0)
	\ge
	\frac{\|\mathbf d_{\mathrm C}^k\|^2}{2\alpha_k}.
\end{equation}
The condition
$\alpha_k\le\Delta_k/\|\mathbf g_k\|$
keeps $\mathbf d_{\mathrm C}^k$ within the current TR $\Delta_k$, and
\eqref{eq:cauchy_model_increase} ensures a positive model increase
at every nonstationary iterate.
The trial step is then selected as
\begin{equation}
	\label{eq:trial_step_selection}
	\mathbf d^k
	\triangleq
	\begin{cases}
		\mathbf d_{\mathrm{TR}}^k,
		&
		q_k(\mathbf d_{\mathrm{TR}}^k)-q_k(\mathbf0)
		\ge
		\kappa_0
		\big[
		q_k(\mathbf d_{\mathrm C}^k)-q_k(\mathbf0)
		\big], 
		\\
		\mathbf d_{\mathrm C}^k,
		&
		\text{otherwise},
	\end{cases}
\end{equation}
where $\kappa_0\in(0,1]$. 

Next, to quantify how well the quadratic model (\ref{eq:qk}) predicts the actual surrogate progress, we compare its predicted model increase with the realized surrogate improvement. Specifically, we define the model predicted gain as
\begin{equation}
	\label{pred}
	\mathrm{pred}_k=q_k(\mathbf d^k)-q_k(\mathbf 0),
\end{equation}
where $\mathbf d^k=\mathbf x^{\mathrm{proj}}-\mathbf x^k $ is the current step,
while the realized surrogate gain is 
\begin{equation}
	\label{ared}
	\mathrm{ared}_k=\hat g(\mathbf x^k+\mathbf d^k)-\hat g(\mathbf x^k).
\end{equation}
We then define
\begin{equation}
	\label{rhok}
	\rho_k =\frac{\mathrm{ared}_k}{\mathrm{pred}_k}, 
\end{equation}
which compares the actual surrogate improvement to the model‐predicted improvement. It is an important parameter that can be used to decide step acceptance and adjust the TR radius $\Delta_k$. 
When \(\rho_k\) is close to one, the model \(q_k\) provides an accurate local approximation, in which case the current step $\mathbf d^k$ is accepted and the radius \(\Delta_k\) can be enlarged.
However, when  \(\rho_k\) is very small or even negative, the predicted gain is not realized, which indicates that the model has been trusted too far and the current step $\mathbf d^k$ is rejected, in this case \(\Delta_k\) should be reduced.
This feedback mechanism dynamically adjusts \(\Delta_k\) so that the TR stays at a scale where the quadratic model remains reliable.
The overall procedure for solving problem \eqref{P1} is summarized in Algorithm~1, which is guaranteed to converge to the set of stationary solutions of problem (\ref{P1})\cite{book1}.

\subsection{Intuitive Explanation of the Proposed PTRSO Algorithm}
\begin{figure}[t]
	\centering
	\includegraphics[width=2.4in]{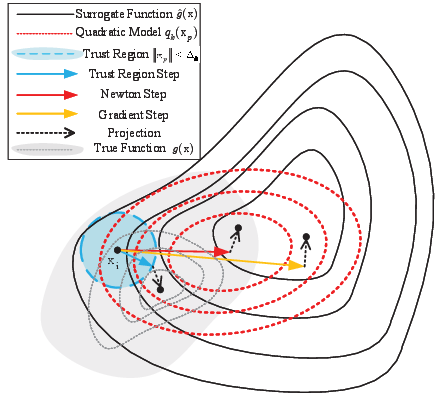}
	\caption{Toy example of one step iteration geometry.}
	\label{fig:PTRSO-geometry}
	\vspace{-0mm}
\end{figure}

%

	To provide further insights, we compare the proposed PTRSO algorithm with the projected gradient descent (PGD) and the Newton method with projection-based line search. In both baseline methods, a search direction is determined first, followed by a line search over the scalar step size along that direction, and the resulting point is projected onto the feasible set $\mathcal{X}$.
	However, line searches can accept long steps, and the repeated cycle of “search-then-project” often produces zigzagging moves near the feasible boundary. Both effects tend to carry the iterate far away from the anchor \(\mathbf x_i\). 
	As discussed in Section II, with fixed snapshot budget \(T\), the surrogate error \(|g(\mathbf x)-\hat g(\mathbf x)|\) satisfies the local estimate in \eqref{eq:gap-global}, which grows approximately linearly with \(\|\mathbf x-\mathbf x_i\|\). 
	Consequently, both baseline methods may drift away from the anchor $\mathbf x_i$, enlarge the surrogate error, and thereby weaken the improvement in the true objective, as illustrated by the toy example in Fig.~\ref{fig:PTRSO-geometry}.

	In contrast, the proposed PTRSO algorithm solves the TR subproblem~\eqref{eq:TR-sub} under the radius constraint $\|\mathbf{x}_p\|\le\Delta_k$. This confines each update within a neighborhood where the quadratic model~\eqref{eq:qk} remains reliable and the mismatch between the surrogate $\hat g(\mathbf x)$ and true objective $g(\mathbf x)$ is small, as depicted in Fig.~\ref{fig:PTRSO-geometry}. The trial point $\mathbf{x}_{\mathrm{trial}}=\mathbf{x}^k+\mathbf{x}_p$ is then projected back into $\mathcal{X}$, which avoids the repeated project-search zigzag that would otherwise enlarge $\|\mathbf{x}-\mathbf{x}_i\|$. Also, the step is accepted or rejected based on the gain ratio $\rho_k$ in~\eqref{rhok}, which adaptively regulates the TR radius and prevents excessively long moves that could deteriorate the local error control. 
	After each block of $T$ snapshots, the sample covariance is refreshed at the new anchor according to the two-timescale framework, i.e., from $\hat{\mathbf{R}}(\mathbf{x}_i)$ to $\hat{\mathbf{R}}(\mathbf{x}_{i+1})$, ensuring that the surrogate remains centered around the latest design point and retains its fidelity along the optimization trajectory.

\begin{algorithm}[t]
	\caption{Proposed PTRSO Algorithm}
	\label{alg:PTRSO}
	\renewcommand{\algorithmicrequire}{\textbf{Input:}}
	\renewcommand{\algorithmicensure}{\textbf{Output:}}
	\begin{algorithmic}[1]
		\REQUIRE Feasible set $\mathcal X=\{\mathbf x:\mathbf U\mathbf x\preceq\mathbf l\}$; initial point $\mathbf x^0\!\in\!\mathcal X$; initial radius $\Delta_0$ and maximum radius $\Delta_{\max}$; thresholds $0\le \eta<\eta_1<\eta_2<1$; factors $\gamma_{1}\!\in\!(0,1)$ and $\gamma_{2}\!>\!1$; tolerances $\epsilon_1$; maximum iterations $K_{\max}$.
		\ENSURE Optimized APV $\mathbf x_\star$.
		\STATE Set $k\!\leftarrow\!0$.
		\WHILE{$k<K_{\max}$}
		\STATE Compute
		$\{\hat g(\mathbf x^k),\mathbf g_k,\mathbf H_k\}$ and obtain the quadratic model $q_k(\mathbf x_p)$ using \eqref{eq:qk}--\eqref{eq:hess}. 
		\STATE Obtain $\mathbf x_p^k$ by solving \eqref{eq:TR-sub} via the Steihaug-CG method.
		\STATE Obtain $\mathbf x_{\text{trial}}=\mathbf x^k+\mathbf x_p^k$, compute $\mathbf x^{\text{proj}}$ via PAVA by solving (\ref{PAVA}), and select $\mathbf d^k$ through \eqref{eq:trial_step_selection}.
			\IF{$\|\mathbf d_{\mathrm C}^k\|/\alpha_k\le\epsilon_1$}
		\STATE \textbf{break}.
		\ENDIF
		\STATE Obtain $\mathrm{pred}_k$, $\mathrm{ared}_k$ and $\rho_k$ via \eqref{pred}--\eqref{rhok}.
		\STATE 
		$\small
		\Delta_{k+1}=\begin{cases}
			\gamma_{1}\,\Delta_k, & \rho_k\le \eta_1,\\
			\min\{\gamma_{2}\Delta_k,\Delta_{\max}\}, & \rho_k\ge \eta_2 \ \text{and}\ \|\mathbf d^k\|=\Delta_k,\\
			\Delta_k, & \text{otherwise}.
		\end{cases}
		$
		\STATE 		$
		\mathbf x^{k+1}=
		\begin{cases}
			\mathbf x^k+\mathbf d^k, & \rho_k>\eta,\\
			\mathbf x^k, & \text{otherwise}.
		\end{cases}
		$
		\STATE $k\!\leftarrow\!k+1$.
		\ENDWHILE
		\RETURN $\mathbf x_\star=\mathbf x^k$.
	\end{algorithmic}
\end{algorithm}

\begin{figure}[t]
	\centering
	\includegraphics[width=3in]{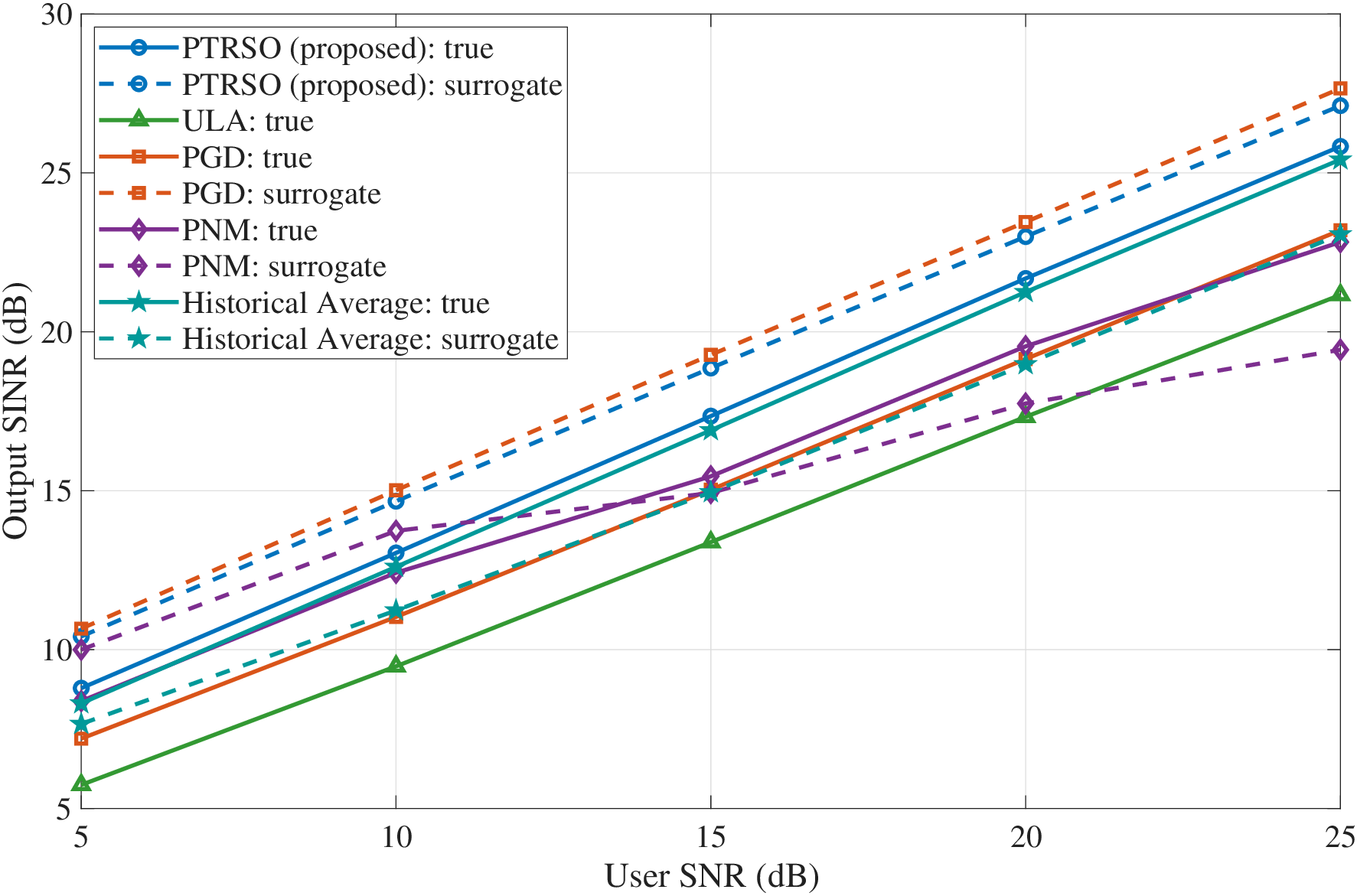}\vspace{-0.0cm}
	\caption{Output SINR (true and surrogate objective) versus user SNR.}
	\label{fig:sinr_snr}
\end{figure}

\begin{figure}[t]
	\centering
	\includegraphics[width=3in]{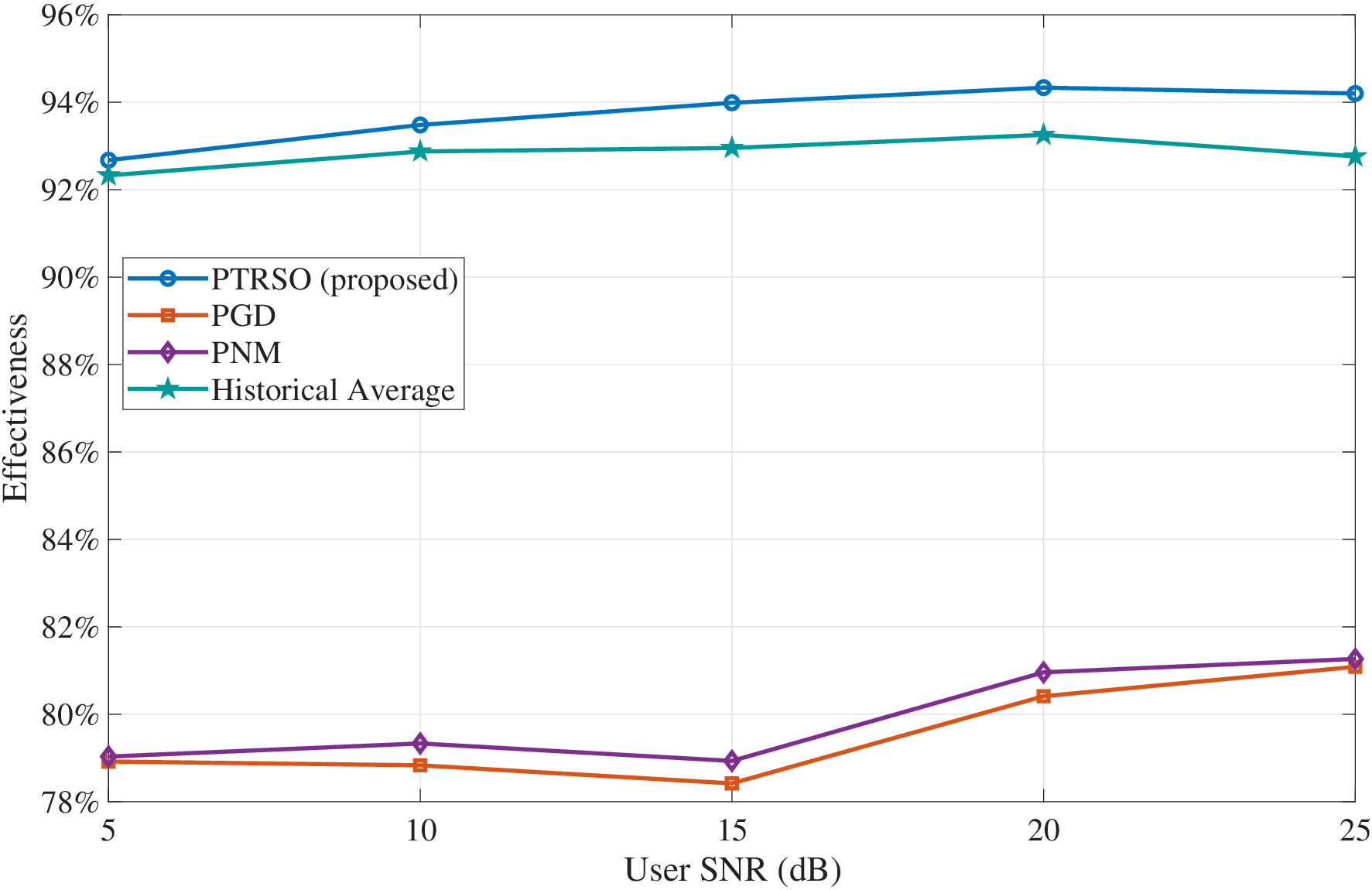}\vspace{-0.0cm}
	\caption{Effectiveness versus user SNR.}
	\label{fig:effectiveness}
\end{figure}

\begin{figure}[t]
	\centering
	\includegraphics[width=3in]{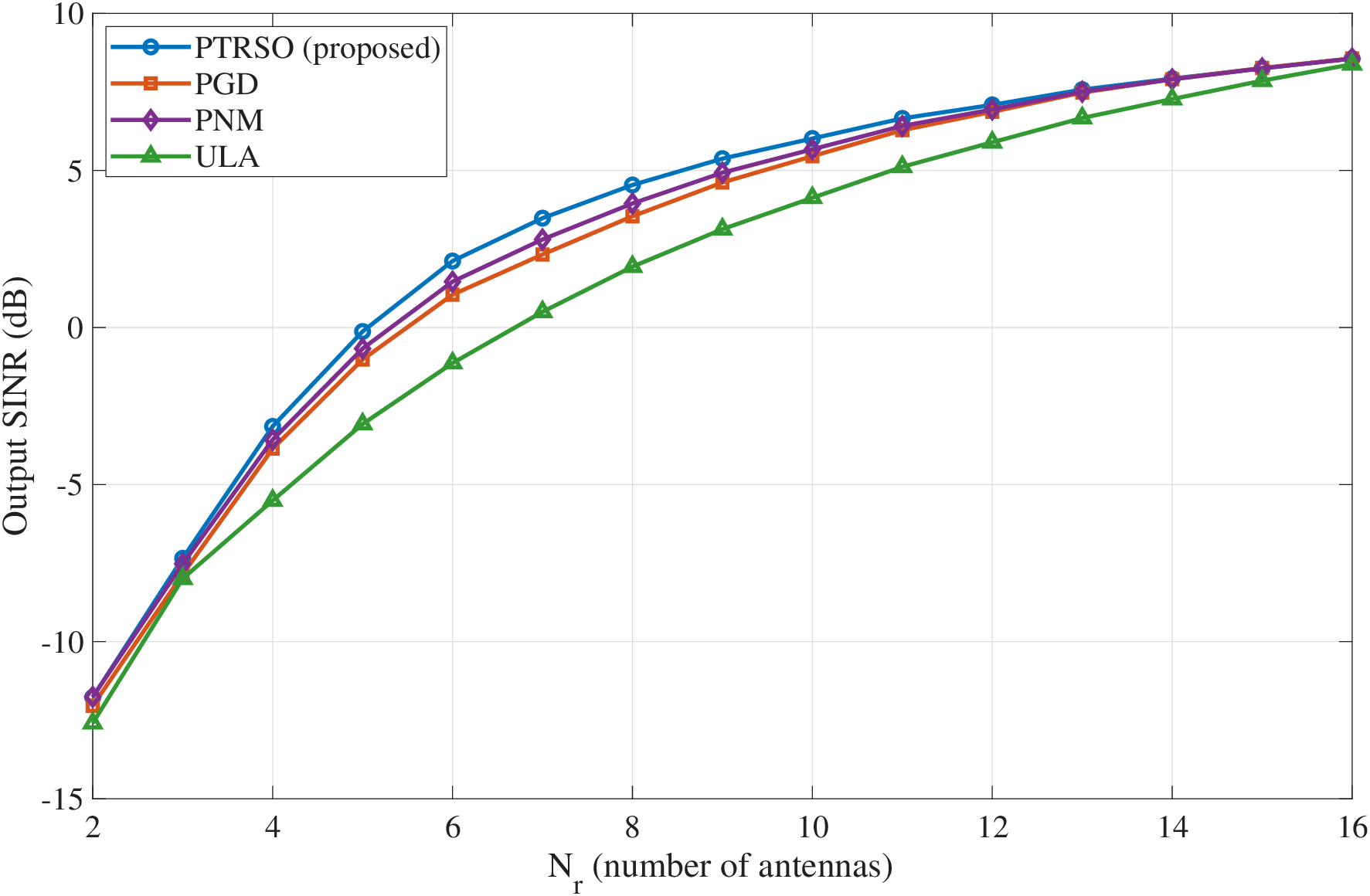}\vspace{-0.0cm}
	\caption{Output SINR versus number of receive antennas.}
	\label{fig:sinr_Nr}
\end{figure}

\begin{figure}[t]
	\centering
	\includegraphics[width=3in]{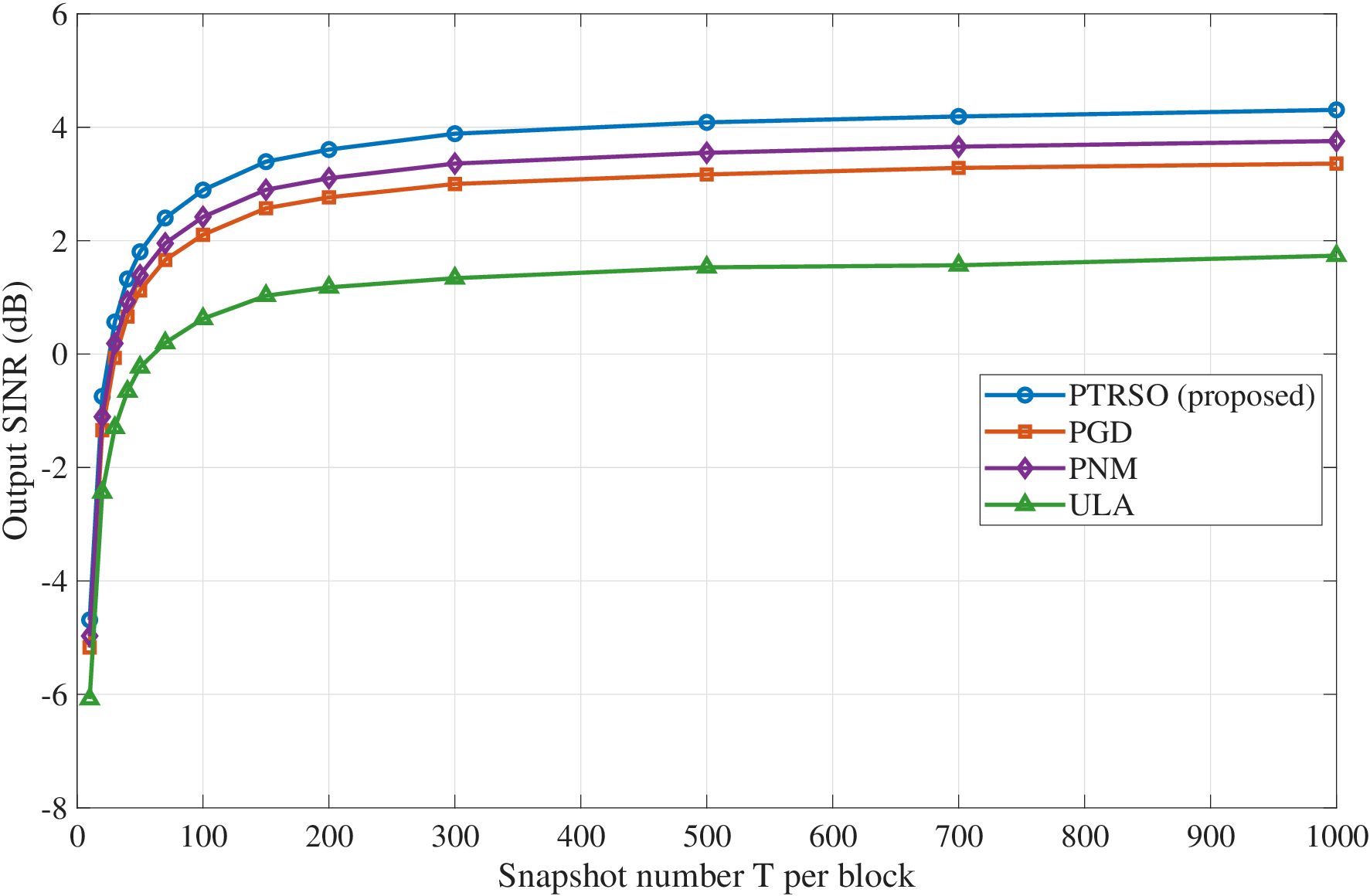}\vspace{-0.0cm}
	\caption{Output SINR versus number of snapshots per block.}
	\label{fig:sinr_T}
\end{figure}

\begin{figure}[t]
	\centering
	\includegraphics[width=3in]{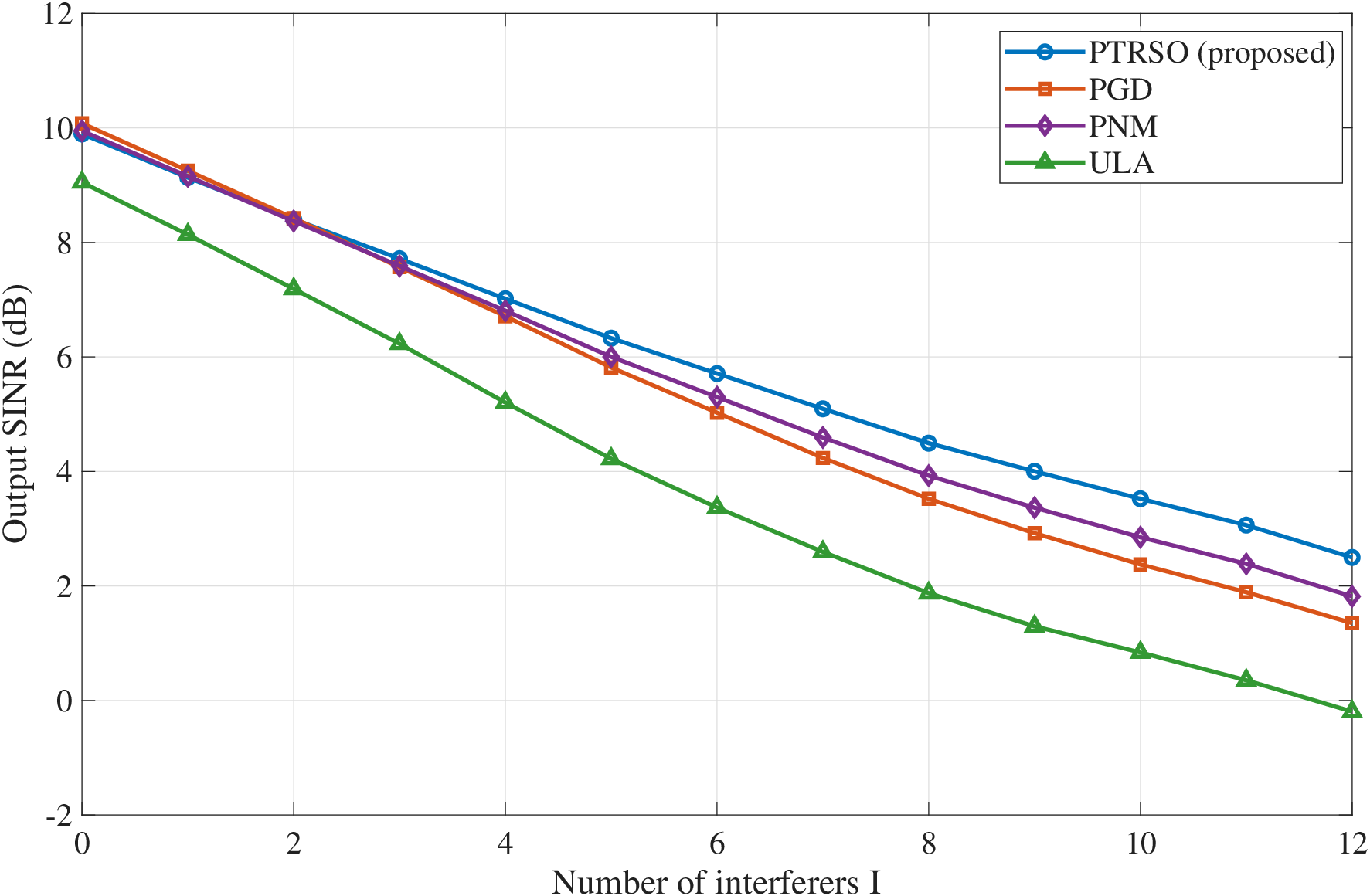}\vspace{-0.0cm}
	\caption{Output SINR versus number of jammers.}
	\label{fig:sinr_I}
\end{figure}

\begin{figure}[t]
	\centering
	\includegraphics[width=3in]{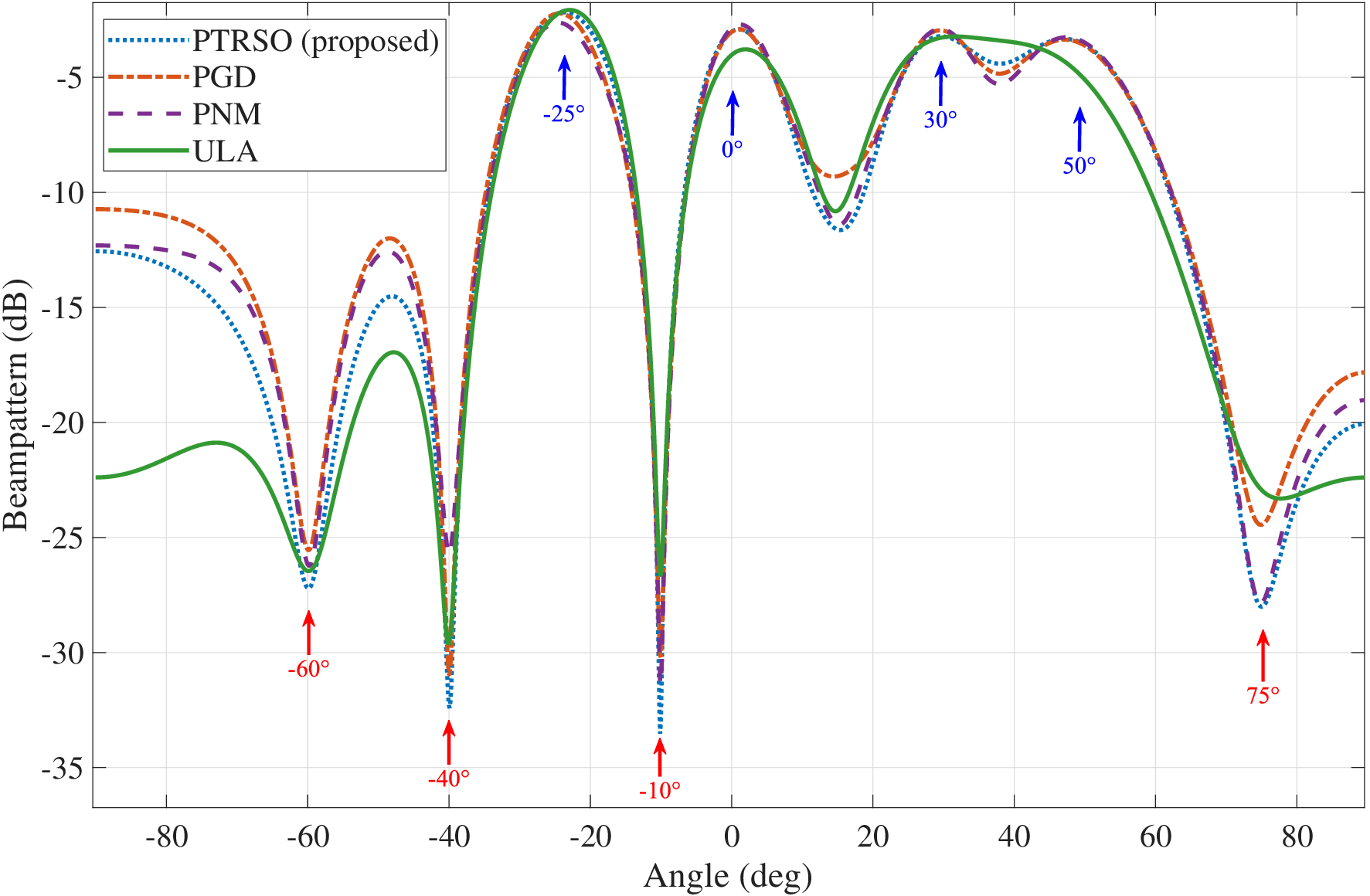}\vspace{-0.0cm}
	\caption{Comparison of beampatterns with different antennas’ positions.}
	\label{fig:beampattern}
\end{figure}

\vspace{-0.0cm}
\section{Numerical Results}
This section presents numerical results to validate the effectiveness of the proposed MA-enabled anti-jamming design under the MVDR framework.
We consider a receive array with \(N_r=8\) MAs, and
the antenna aperture is set to \(D_x=8\lambda\).
The initial anchor is the ULA point \(\mathbf x_0=[0,d,\ldots,(N_r\!-\!1)d]^T\).
The desired channel comprises \(L=8\) paths with azimuth angles \(\theta_\ell\sim\mathrm{Unif}[-\pi/2,\pi/2]\) and channel gains \(\alpha_\ell\sim\mathcal{CN}(0,1/(2L))\).
Besides, \(I=8\) jammers/interferers are considered in this simulation with AoAs \(\phi_i\sim\mathrm{Unif}[-\pi/2,\pi/2]\) and channel gains \(\zeta_i\sim\mathcal{CN}(0,1)\).
Other parameters are set as follows unless
otherwise specified:
\(\operatorname{SNR} = \sigma_s^2/\sigma_n^2\), \(\sigma_i^2=10 \sigma_s^2\), \(\eta_1=0.25\), \(\eta_2=0.75\), \(\gamma_{1}=0.25\), \(\gamma_{2}=2\), \(\epsilon_1=10^{-6}\), and the sample covariance \(\hat{\mathbf R}(\mathbf x_i)\) is formed from \(T=100\) snapshots at \(\mathbf x_i\).
We report both the true output SINR, computed using $\mathbf{R}(\mathbf{x})$, and the surrogate SINR, computed using $\hat{\mathbf{R}}(\mathbf{x}_i)$. 
Each point in the results is averaged over \(1.5\times 10^{4}\) independent Monte Carlo trials.
The following baselines are included:
1) PGD with Armijo backtracking;
	2) Newton method with projection-based line search, denoted as PNM;
	3) unoptimized FPA at \(\mathbf x_0\) with half-wavelength spacing (i.e., ULA);
4) Proposed PTRSO algorithm, but using the historical average surrogate $\hat{\mathbf R}_{\mathrm{avg}}$.

First, we show in Fig.~\ref{fig:sinr_snr} the output SINR versus the user SNR
for the considered algorithms, where both the true SINR and its surrogate
counterpart are plotted. 
Across the entire SNR range, the proposed PTRSO algorithm achieves the
highest true output SINR, providing about \(3\)~dB SINR gain over PGD/PNM and
around \(4\)-\(5\)~dB gain over the ULA baseline.
We also observe that PGD often attains a larger surrogate objective than the
proposed method, and both PNM and the historical-average scheme exhibit
similarly optimistic surrogate SINR values, yet all of them achieve lower
true SINR than PTRSO, which is consistent with our analysis in Section IV.
Fig.~\ref{fig:effectiveness} illustrates  the {effectiveness} of the considered algorithms, defined as the percentage of Monte Carlo trials (out of \(1.5\times10^4\) per SNR point) in which the optimized MA configuration yields a higher true SINR than the ULA baseline. 
As shown, the proposed PTRSO algorithm attains the highest effectiveness, over $92\%$ for all SNR values, followed by PNM and PGD. 
The historical-average scheme exhibits lower effectiveness than the proposed local-anchor surrogate, and
this result is consistent with the analysis of the geometric bias characterized in Section~III.
Overall, these trends confirm that explicit step-size control via the
TR strategy, combined with the local-anchor covariance surrogate,
significantly improves the transfer from surrogate gains to true performance
gains under finite-snapshot covariance estimation.

Next, Fig.~\ref{fig:sinr_Nr} shows the output SINR versus the number of receive
antennas $N_r$, with the total aperture fixed at $D_x=8 \lambda$ and the user SNR set as 0 dB. 
As shown, the proposed PTRSO algorithm consistently attains the highest SINR across
the whole range of $N_r$. The PGD and PNM methods yield comparable but lower performance, while the ULA baseline performs the worst. 
The performance gap is most pronounced for moderate array sizes, where the
MA geometry still provides substantial additional spatial DoFs
compared with the ULA. 
When $N_r$ is very small (e.g., $N_r=2$), all schemes are fundamentally
limited by the lack of spatial DoFs, and the gains are mostly contributed by the MVDR beamforming. When $N_r$
is very large (e.g., $N_r=16$), the antennas become densely packed within
the fixed aperture and the SINR gain of adding more elements
diminishes. 
Fig.~\ref{fig:sinr_T} plots the output SINR as a function of the snapshot number $T$
per block. 
It is observed that all schemes exhibit a monotonic SINR improvement with $T$ that
gradually saturates, while the proposed PTRSO algorithm maintains a clear
advantage over PGD, PNM, and the ULA baseline for all snapshot budgets, which
agrees with Lemma~3.
Fig.~\ref{fig:sinr_I} illustrates the output SINR versus the number
of jammers $I$. 
As expected, the SINR of all schemes degrades as more interferers are present,
but the proposed PTRSO algorithm consistently yields the highest SINR over
the entire range of $I$, followed by PNM and PGD, with the ULA baseline
remaining the worst.

To gain further insight, Fig.~\ref{fig:beampattern} compares the beampatterns of the considered schemes. 
For illustration, we consider a scenario where the desired user channel
contains \(L=4\) paths with azimuth angles
\(\theta_\ell \in \{-25^\circ,0^\circ,30^\circ,50^\circ\}\), which are
marked by blue arrows in the figure. 
In addition, \(I=4\) jammers/interferers are present with AoAs
\(\phi_i \in \{-60^\circ,-40^\circ,-10^\circ,75^\circ\}\), indicated by red
arrows.
We observe that all MA-based schemes form strong
mainlobes around the desired directions, while placing deep spatial nulls
close to the interferer angles. 
The proposed PTRSO solution achieves the deepest nulls at \(\{-60^\circ,-40^\circ,-10^\circ,75^\circ\}\) and slightly lower sidelobe levels in the vicinity of the
desired paths, which explains its superior output SINR. 
By contrast, the fixed ULA exhibits noticeably shallower notches at the
jammers' AoAs, thereby leading to higher residual interference power.

Finally, we evaluate the computational complexity of each algorithm. 
	For the proposed PTRSO, the dominant cost arises from solving the TR subproblem using the Steihaug-CG method, 
	which requires $\kappa$ Hessian-vector products, each with a cost of $\mathcal O(N_r^2)$. 
	Evaluating $\{\hat g,\nabla\hat g\}$ incurs an additional cost of $\mathcal O(N_r^2+N_r L)$, 
	while the complexity of the projection step $\operatorname{Proj}_{\mathcal X}(\cdot)$ is $\mathcal O(N_r)$ and thus negligible. 
	Consequently, the per-iteration complexity is $\mathcal O(\kappa N_r^2 + N_r L)$, 
	and the total complexity over $K$ iterations is $\mathcal O\!\big(K(\kappa N_r^2 + N_r L)\big)$. 
	For PGD with Armijo backtracking, the main cost stems from the line search, 
	which on average performs $\bar b$ evaluations of $\{\hat g,\nabla\hat g\}$, 
	each of cost $\mathcal O(N_r^2+N_r L)$. 
	Hence, the total complexity is $\mathcal O\!\big(K\,\bar b (N_r^2+N_r L)\big)$. 
	For the PNM baseline, the Hessian is explicitly formed and factorized via Cholesky decomposition at each iteration, 
	resulting in a per-iteration cost of $\mathcal O(N_r^3)$ and thus the total cost is $\mathcal O(K N_r^3)$. 
	Overall, PTRSO attains a per-iteration complexity comparable to that of PGD while avoiding the cubic scaling of PNM, thereby achieving a favorable balance between computational efficiency and optimization accuracy.

\section{Conclusion}
This paper investigated the challenge of MA-enabled anti-jamming reception under the practical constraint of unknown jamming channels. We introduced a snapshot-based covariance surrogate to replace the unavailable, position-dependent interference covariance in the MVDR objective and thereby enable data-driven position optimization without jammer-side priors. We established a local approximation bound for the mismatch between the surrogate and the true objective, and further compared the proposed local-anchor surrogate with a natural historical-average surrogate and proved that the latter introduces a non-vanishing geometric bias in the MVDR cost. Building on this theoretical framework, we proposed the low-complexity PTRSO algorithm to ensure that surrogate improvements translate reliably into true SINR gains. Simulations demonstrated consistent SINR gains over existing baselines, confirming the effectiveness of the proposed approach.

{\appendices
	\section{Proof of Lemma 1}

	We first establish a scalar inequality that will be used componentwise.
	For $\forall u,v \in \mathbb{R}$,
$
			|e^{j u} - e^{j v}|
			= 2|\sin \frac{u-v}{2}|
			\leq
			|u-v|.
$
	Now for $ \forall \theta \in \mathbb{R}$ and $\forall\mathbf{x},\mathbf{y}\in\mathbb{R}^{N_r}$, we have
$
			\|
			\mathbf{a}(\mathbf{x},\theta)
			-
			\mathbf{a}(\mathbf{y},\theta)
			\|^2
			=
			\sum_{n=1}^{N_r}
			|
			e^{j k x_n \sin\theta}
			-
			e^{j k y_n \sin\theta}
			|^2\leq
			\sum_{n=1}^{N_r}
			k^2 \sin^2\theta\,|x_n-y_n|^2
			=
			k^2 \sin^2\theta
			\|
			\mathbf{x}-\mathbf{y}
			\|^2,
	$
	which completes the proof.
	
	\section{Proof of Lemma 2}

	First, we establish the uniform bounds on channel norms.
	Since each entry of the steering vector has unit magnitude, 
	we have
$
			\|
			\mathbf{h}_0(\mathbf{x})
			\|
			\leq
			\|
			\mathbf{A}(\mathbf{x})
			\|_{\mathrm{F}}
			\|
			\boldsymbol{\alpha}
			\|
			=
			\sqrt{N_r L}
			\|
			\boldsymbol{\alpha}
			\|.
$
	Similarly, for each jammer channel, we can obtain
$\|	\mathbf{g}_i(\mathbf{x})\|
			=
			|\zeta_i|
			\|
			\mathbf{a}(\mathbf{x},\phi_i)
			\|
			=
			|\zeta_i|\sqrt{N_r},
			i=1,\ldots,I.
$
		
		Next, we will prove the Lipschitz continuity of $\mathbf{h}_0(\mathbf{x})$
	and $\mathbf{g}_i(\mathbf{x})$ with respect to $\mathbf{x}$ by using Lemma 1. For
	$\forall\mathbf{x},\mathbf{y}\in\mathbb{R}^{N_r}$, we have
$
			\|
			\mathbf{h}_0(\mathbf{x}) - \mathbf{h}_0(\mathbf{y})
			\|
			\leq
			\|
			\mathbf{A}(\mathbf{x}) - \mathbf{A}(\mathbf{y})
			\|_{\mathrm{F}}
			\|
			\boldsymbol{\alpha}
			\|
			\leq
			k
			\|
			\boldsymbol{\alpha}
			\|
			\sqrt{
				\sum_{\ell=1}^{L}
				\sin^2\theta_\ell
			}\,
			\|
			\mathbf{x}-\mathbf{y}
			\|,
$
and
$\|
			\mathbf{g}_i(\mathbf{x}) - \mathbf{g}_i(\mathbf{y})
			\|\leq
			k|\zeta_i||\sin\phi_i|
			\|
			\mathbf{x}-\mathbf{y}
			\|,i=1,\ldots,I.
$
	In order to obtain the Lipschitz continuity of $\mathbf{R}(\mathbf{x})$, we establish the following inequality for $\forall \mathbf{u},\mathbf{v}\in\mathbb{C}^{N_r}$:
$\|
			\mathbf{u}\mathbf{u}^{H}
			-
			\mathbf{v}\mathbf{v}^{H}
			\|
			\leq
			(
			\|\mathbf{u}\| + \|\mathbf{v}\|
		)
			\|
			\mathbf{u}-\mathbf{v}
		\|.
$
	Applying the above inequality with
	$\mathbf{u}=\mathbf{h}_0(\mathbf{x})$ and $\mathbf{v}=\mathbf{h}_0(\mathbf{y})$ yields
$
			\|
			\mathbf{h}_0(\mathbf{x})\mathbf{h}_0(\mathbf{x})^{H}
			-
			\mathbf{h}_0(\mathbf{y})\mathbf{h}_0(\mathbf{y})^{H}
			\|\leq
		(
			\|\mathbf{h}_0(\mathbf{x})\|
			+
			\|\mathbf{h}_0(\mathbf{y})\|
			)
			\|
			\mathbf{h}_0(\mathbf{x}) - \mathbf{h}_0(\mathbf{y})
			\|\leq
			2 k \sqrt{N_r}\,
			\sqrt{L}\,
			\|
			\boldsymbol{\alpha}
			\|^2
			\sqrt{
				\sum_{\ell=1}^{L}
				\sin^2\theta_\ell
			}
			\|
			\mathbf{x}-\mathbf{y}
			\|.
$
	Similarly,  we can obtain
$
			\|
			\mathbf{g}_i(\mathbf{x})\mathbf{g}_i(\mathbf{x})^{H}
			-
			\mathbf{g}_i(\mathbf{y})\mathbf{g}_i(\mathbf{y})^{H}
			\|
			\leq
			2 k \sqrt{N_r}\,
			|\zeta_i|^2
			|\sin\phi_i|
			\|
			\mathbf{x}-\mathbf{y}
			\|.
	$
	Hence, combining the bounds for the desired-signal and interference terms
	and using the triangle inequality as follows:
$
			\|
			\mathbf{R}(\mathbf{x}) - \mathbf{R}(\mathbf{y})
			\|
			\le
			\sigma_s^2(
			\|\mathbf{h}_0(\mathbf{x})\|+\|\mathbf{h}_0(\mathbf{y})\|
			)
			\|\mathbf{h}_0(\mathbf{x})-\mathbf{h}_0(\mathbf{y})\|   
			+
			\sum_{i=1}^{I}\sigma_i^2
			(
			\|\mathbf{g}_i(\mathbf{x})\|+\|\mathbf{g}_i(\mathbf{y})\|
			)
			\|\mathbf{g}_i(\mathbf{x})-\mathbf{g}_i(\mathbf{y})\|,
$
	 we arrive at (\ref{eq:R-Lip-short}),
	which completes the proof.

\section{Proof of Lemma 3}
For each anchor position $\mathbf{x}_i$ and snapshot index
$t=1,\ldots,T$, we first express the received complex signal in the following real-valued form so that the real-domain covariance estimation theorem can be applied:
\begin{equation}
	\begin{aligned}
		\widetilde{\mathbf{r}}(\mathbf{x}_i,t)
		\triangleq
		\begin{bmatrix}
			\Re\{\mathbf{r}(\mathbf{x}_i,t)\} \\
			\Im\{\mathbf{r}(\mathbf{x}_i,t)\}
		\end{bmatrix}
		\in\mathbb{R}^{2N_r}.
	\end{aligned}
\end{equation}
Let
$
		\boldsymbol{\Sigma}_{\mathrm{R}}
		\triangleq
		\mathbb{E}
		[
		\widetilde{\mathbf{r}}(\mathbf{x}_i,t)
		\widetilde{\mathbf{r}}(\mathbf{x}_i,t)^{T}
		]$ and $
		\boldsymbol{\Sigma}_{\mathrm{R},T}
		\triangleq
		\frac{1}{T}
		\sum_{t=1}^{T}
		\widetilde{\mathbf{r}}(\mathbf{x}_i,t)
		\widetilde{\mathbf{r}}(\mathbf{x}_i,t)^{T}
$
denote the real second-moment matrix and its sample version, respectively.

From the signal model~\eqref{eq:r_t} in Section~II, the desired symbols $s_0(t)$ and
jammer symbols $s_i(t)$, $i=1,\ldots,I$, have finite average power
$\sigma_s^{2} = \mathbb{E}\{|s_0(t)|^{2}\}$ and
$\sigma_i^{2} = \mathbb{E}\{|s_i(t)|^{2}\}$, respectively.
In practical systems, these symbols are generated from finite-energy
modulation formats (e.g., PSK/QAM constellations or truncated Gaussian
codebooks), so their real and imaginary parts are bounded random
variables and hence sub-Gaussian in the sense of the $\psi_2$-norm
\cite{highDim}.
Thus, each coordinate
of $\mathbf{r}(\mathbf{x}_i,t)$ is a finite linear combination of
independent sub-Gaussian scalar random variables, and hence is itself
sub-Gaussian.  Consequently, the real vector
$\widetilde{\mathbf{r}}(\mathbf{x}_i,t)$ is a sub-Gaussian random
vector in $\mathbb{R}^{2N_r}$, and
$\{\widetilde{\mathbf{r}}(\mathbf{x}_i,t)\}_{t=1}^{T}$ are i.i.d.\
copies of this vector.

More precisely, by standard properties of sub-Gaussian random
variables \cite{highDim}, there exists a constant $K\ge 1$ such that
for every $\mathbf{v}\in\mathbb{R}^{2N_r}$,
\begin{equation}
	\begin{aligned}
		\big\|
		\langle
		\widetilde{\mathbf{r}}(\mathbf{x}_i,t),
		\mathbf{v}
		\rangle
		\big\|_{\psi_2}
		\le
		K
		\big\|
		\langle
		\widetilde{\mathbf{r}}(\mathbf{x}_i,t),
		\mathbf{v}
		\rangle
		\big\|_{L_2}.
	\end{aligned}
\end{equation}
Therefore, we can apply
the covariance estimation theorem in \cite{highDim} to the random
vector $\widetilde{\mathbf{r}}(\mathbf{x}_i,t)$ with dimension
$2N_r$ and sample size $T$.  In particular, there must exist a universal constant
$C>0$ (independent of $T$ and $N_r$) such that
\begin{equation}
	\label{realFormBound}
	\begin{aligned}
		\mathbb{E}
		\big\|
		\boldsymbol{\Sigma}_{\mathrm{R},T}
		-
		\boldsymbol{\Sigma}_{\mathrm{R}}
		\big\|
		\le
		C K^2
		\bigg(
		\sqrt{\frac{2N_r}{T}}
		+
		\frac{2N_r}{T}
		\bigg)
		\big\|
		\boldsymbol{\Sigma}_{\mathrm{R}}
		\big\|.
	\end{aligned}
\end{equation}

In order to extend this result to our complex-valued model, we define
a matrix
$
		\mathbf{M}
		\triangleq
		[
		\mathbf{I}_{N_r}
		\ \ {j}\mathbf{I}_{N_r}
		]
		\in\mathbb{C}^{N_r\times 2N_r},
$
so that
$
		\mathbf{r}(\mathbf{x}_i,t)
		=
		\mathbf{M}\,
		\widetilde{\mathbf{r}}(\mathbf{x}_i,t).
$
The complex second-moment matrix and its sample estimate can then be
written as
$
		\mathbf{R}(\mathbf{x}_i)
		\triangleq
		\mathbb{E}
		\big[
		\mathbf{r}(\mathbf{x}_i,t)
		\mathbf{r}(\mathbf{x}_i,t)^{ H}
		\big]
		=
		\mathbf{M}
		\boldsymbol{\Sigma}_{\mathrm{R}}
		\mathbf{M}^{ H},
		\hat{\mathbf{R}}(\mathbf{x}_i)
		\triangleq
		\frac{1}{T}
		\sum_{t=1}^{T}
		\mathbf{r}(\mathbf{x}_i,t)
		\mathbf{r}(\mathbf{x}_i,t)^{ H}
		=
		\mathbf{M}
		\boldsymbol{\Sigma}_{\mathrm{R},T}
		\mathbf{M}^{ H}.
$
Hence
\begin{equation}
	\begin{aligned}
		\hat{\mathbf{R}}(\mathbf{x}_i)
		-
		\mathbf{R}(\mathbf{x}_i)
		&=
		\mathbf{M}
		\big(
		\boldsymbol{\Sigma}_{\mathrm{R},T}
		-
		\boldsymbol{\Sigma}_{\mathrm{R}}
		\big)
		\mathbf{M}^{ H}.
	\end{aligned}
\end{equation}
Using the submultiplicativity of the spectral norm and the fact that
$\mathbf{M}\mathbf{M}^{ H}=2\mathbf{I}_{N_r}$, we obtain
\begin{equation}\label{FirstIneq}
	\begin{aligned}
		\big\|
		\hat{\mathbf{R}}(\mathbf{x}_i)
		-
		\mathbf{R}(\mathbf{x}_i)
		\big\|
		&\le
		\big\|
		\mathbf{M}
		\big\|^2
		\big\|
		\boldsymbol{\Sigma}_{\mathrm{R},T}
		-
		\boldsymbol{\Sigma}_{\mathrm{R}}
		\big\|
		=
		2
		\big\|
		\boldsymbol{\Sigma}_{\mathrm{R},T}
		-
		\boldsymbol{\Sigma}_{\mathrm{R}}
		\big\|.
	\end{aligned}
\end{equation}
Taking expectations on both sides of (\ref{FirstIneq}) and combining with the previous bound (\ref{realFormBound}) yields
\begin{equation}
	\label{ineqWithSig}
	\begin{aligned}
		\mathbb{E}
		\left\lbrace \big\|
		\hat{\mathbf{R}}(\mathbf{x}_i)
		-
		\mathbf{R}(\mathbf{x}_i)
		\big\|\right\rbrace 
		&\le
		2 C K^2
		\bigg(
		\sqrt{\frac{2N_r}{T}}
		+
		\frac{2N_r}{T}
		\bigg)
		\big\|
		\boldsymbol{\Sigma}_{\mathrm{R}}
		\big\|.
	\end{aligned}
\end{equation}

Next, we relate $\big\|\boldsymbol{\Sigma}_{\mathrm{R}}\big\|$ to
$\big\|\mathbf{R}(\mathbf{x}_i)\big\|$.  For any vector
$\mathbf{y}\in\mathbb{R}^{2N_r}$, write
\begin{equation}
	\begin{aligned}
		\mathbf{y}
		&=
		\begin{bmatrix}
			\mathbf{p} \\
			\mathbf{q}
		\end{bmatrix},
		\qquad
		\mathbf{p},\mathbf{q}\in\mathbb{R}^{N_r},
	\end{aligned}
\end{equation}
and define
$
		\mathbf{s}
		\triangleq
		\mathbf{p}
		+
		{j}\mathbf{q}
		\in\mathbb{C}^{N_r}.
$
Then,
$\|\mathbf{s}\|^2=\|\mathbf{p}\|^2+\|\mathbf{q}\|^2
=\|\mathbf{y}\|^2$. Moreover,
\begin{equation}
	\begin{aligned}
		\Re\big(
		\mathbf{s}^{H}
		\mathbf{r}(\mathbf{x}_i,t)
		\big)
		&=
		\mathbf{p}^{T}\Re\{\mathbf{r}(\mathbf{x}_i,t)\}
		+
		\mathbf{q}^{T}\Im\{\mathbf{r}(\mathbf{x}_i,t)\}
		\\
		&=
		\mathbf{y}^{T}
		\widetilde{\mathbf{r}}(\mathbf{x}_i,t).
	\end{aligned}
\end{equation}
By the definition of $\boldsymbol{\Sigma}_{\mathrm R}$, it follows that 
\begin{equation}
	\begin{aligned}
		\mathbf{y}^{T}
		\boldsymbol{\Sigma}_{\mathrm R}
		\mathbf{y}
		&=
		\mathbb{E}
		\left\{
		\big(
		\mathbf{y}^{T}
		\widetilde{\mathbf{r}}(\mathbf{x}_i,t)
		\big)^2
		\right\}
		=
		\mathbb{E}
		\left\{
		\Re\big(
		\mathbf{s}^{H}
		\mathbf{r}(\mathbf{x}_i,t)
		\big)^2
		\right\}
		\\
		&\le
		\mathbb{E}
		\left\{
		\big|
		\mathbf{s}^{H}
		\mathbf{r}(\mathbf{x}_i,t)
		\big|^2
		\right\}
		=
		\mathbf{s}^{H}
		\mathbf{R}(\mathbf{x}_i)
		\mathbf{s},
	\end{aligned}
\end{equation}
where we have used $|\Re(z)|\le|z|$ for any complex scalar $z$. Therefore,
for any nonzero $\mathbf{y}$, the following inequality holds:
\begin{equation}\label{RayleighIneq}
	\begin{aligned}
		\frac{
			\mathbf{y}^{T}
			\boldsymbol{\Sigma}_{\mathrm R}
			\mathbf{y}
		}{
			\|\mathbf{y}\|^2
		}
		&\le
		\frac{
			\mathbf{s}^{H}
			\mathbf{R}(\mathbf{x}_i)
			\mathbf{s}
		}{
			\|\mathbf{s}\|^2
		}.
	\end{aligned}
\end{equation}
Taking the maximum over all $\mathbf{y}\neq\mathbf{0}$ on (\ref{RayleighIneq}) yields
\begin{equation}
	\begin{aligned}
		\big\|
		\boldsymbol{\Sigma}_{\mathrm R}
		\big\|
		&=
		\max_{\mathbf{y}\neq\mathbf{0}}
		\frac{
			\mathbf{y}^{T}
			\boldsymbol{\Sigma}_{\mathrm R}
			\mathbf{y}
		}{
			\|\mathbf{y}\|^2
		}
		\le
		\max_{\mathbf{s}\neq\mathbf{0}}
		\frac{
			\mathbf{s}^{H}
			\mathbf{R}(\mathbf{x}_i)
			\mathbf{s}
		}{
			\|\mathbf{s}\|^2
		}
		=
		\big\|
		\mathbf{R}(\mathbf{x}_i)
		\big\|.
	\end{aligned}
\end{equation}
Substituting this relation into \eqref{ineqWithSig} and absorbing all
numerical constants into the constant $C$, we obtain
\begin{equation}
	\begin{aligned}
		\mathbb{E}
		\left\{
		\big\|
		\hat{\mathbf{R}}(\mathbf{x}_i)
		-
		\mathbf{R}(\mathbf{x}_i)
		\big\|\right\}
		\le
		C K^2
		\bigg(
		\sqrt{\frac{N_r}{T}}
		+
		\frac{N_r}{T}
		\bigg)
		\big\|
		\mathbf{R}(\mathbf{x}_i)
		\big\|,
	\end{aligned}
\end{equation}
which completes the proof.

{{
\section{Proof of Lemma~\ref{lem:surrogate_error_bound}}
\label{app:surrogate_error_bound}

We first prove the high probability bound \eqref{eq:event_Ei_probability}.
By Lemma~3,
$\mathbb E[
\|\boldsymbol\Delta_i\|
]\le\varepsilon(T)$, and since
$\|\boldsymbol\Delta_i\|$ is nonnegative, Markov's
inequality gives
\begin{equation}
	\begin{aligned}
	1-	\mathbb P(\mathcal E_i)
		&=
		\mathbb P
		\left(
		\big\|
		\boldsymbol\Delta_i
		\big\|
		>
		\sqrt{\sigma_n^2\varepsilon(T)}
		\right)
		\\
		&\le
		\frac{
			\mathbb E
			[
			\|
			\boldsymbol\Delta_i
			\|
			]
		}{
			\sqrt{\sigma_n^2\varepsilon(T)}
		}\le
		\frac{\varepsilon(T)}{\sqrt{\sigma_n^2\varepsilon(T)}}
		=
		\sqrt{
			\frac{\varepsilon(T)}{\sigma_n^2}
		},
	\end{aligned}
\end{equation}
which proves~\eqref{eq:event_Ei_probability}.

We next control the inverse of the sample covariance on
$\mathcal E_i$. From the covariance model, all desired-signal and
interference covariance terms are positive semidefinite, while the
noise covariance is $\sigma_n^2\mathbf I$. Hence,
$\mathbf R(\mathbf x)\succeq\sigma_n^2\mathbf I$ for every
$\mathbf x\in\mathcal X$. In particular,
$\lambda_{\min}(\mathbf R(\mathbf x_i))\ge\sigma_n^2$.
Since
$\boldsymbol\Delta_i$ is Hermitian, Weyl's
inequality yields, on $\mathcal E_i$, 
$
		\lambda_{\min}
		(
		\hat{\mathbf R}(\mathbf x_i)
		)\ge
		\lambda_{\min}
		(
		\mathbf R(\mathbf x_i)
		)
		-
		\|
		\boldsymbol\Delta_i
		\|\ge
		\sigma_n^2-\sqrt{\sigma_n^2\varepsilon(T)}.
$
Since $\varepsilon(T)\le\sigma_n^2/4$ implies
$\sqrt{\sigma_n^2\varepsilon(T)}\le\sigma_n^2/2$, it follows that
$
\hat{\mathbf R}(\mathbf x_i)
\succeq
\frac{\sigma_n^2}{2}\mathbf I$ and $
\|
\hat{\mathbf R}(\mathbf x_i)^{-1}
\|
\le
{2}/{\sigma_n^2}.
$
For any $\mathbf x\in\mathcal X$, the triangle inequality and
Lemma~2 give, on $\mathcal E_i$,
\begin{equation}
	\begin{aligned}
		\big\|
		\hat{\mathbf R}(\mathbf x_i)
		-
		\mathbf R(\mathbf x)
		\big\|
		&\le
		\big\|
		\hat{\mathbf R}(\mathbf x_i)
		-
		\mathbf R(\mathbf x_i)
		\big\|+
		\big\|
		\mathbf R(\mathbf x_i)
		-
		\mathbf R(\mathbf x)
		\big\|
		\\
		&\le
		\sqrt{\sigma_n^2\varepsilon(T)}
		+
		L_R\|\mathbf x-\mathbf x_i\|.
	\end{aligned}
\end{equation}
Using the inverse identity, 
we can further obtain
\begin{equation}
	\begin{aligned}
		&\big|
		g(\mathbf x)-\hat g(\mathbf x)
		\big|
		=
		\Big|
		\mathbf h_0(\mathbf x)^H
		\big(
		\mathbf R(\mathbf x)^{-1}
		-
		\hat{\mathbf R}(\mathbf x_i)^{-1}
		\big)
		\mathbf h_0(\mathbf x)
		\Big|
		\\
		&\le
		\big\|
		\mathbf h_0(\mathbf x)
		\big\|^2
		\big\|
		\mathbf R(\mathbf x)^{-1}
		\big\|
		\big\|
		\hat{\mathbf R}(\mathbf x_i)
		-
		\mathbf R(\mathbf x)
		\big\|
		\big\|
		\hat{\mathbf R}(\mathbf x_i)^{-1}
		\big\|.
	\end{aligned}
\end{equation}
Since
$\mathbf h_0(\mathbf x)=\mathbf A(\mathbf x)\boldsymbol\alpha$
and
$\|\mathbf A(\mathbf x)\|_{\mathrm F}=\sqrt{N_rL}$, we have
$\|\mathbf h_0(\mathbf x)\|^2
\le N_rL\|\boldsymbol\alpha\|^2$.
Substituting the above bounds yields
\begin{equation}
	\begin{aligned}
		\big|
		g(\mathbf x)-\hat g(\mathbf x)
		\big|
		&\le
		\frac{
			2N_rL\|\boldsymbol\alpha\|^2
		}{
			\sigma_n^4
		}
		\left(
		L_R\|\mathbf x-\mathbf x_i\|
		+
		\sqrt{\sigma_n^2\varepsilon(T)}
		\right)
		\\
		&=
		C_R
		\left(
		L_R\|\mathbf x-\mathbf x_i\|
		+
		\sqrt{\sigma_n^2\varepsilon(T)}
		\right).
	\end{aligned}
\end{equation}
Because the event $\mathcal E_i$ does not depend on the candidate
position $\mathbf x$, the above inequality holds simultaneously for
all $\mathbf x\in\mathcal X$.

Finally, Lemma~3 gives $\varepsilon(T)\to0$ as $T\to\infty$, which implies
$
\mathbb P(\mathcal E_i)
\ge
1-
\sqrt{
	\frac{\varepsilon(T)}{\sigma_n^2}
}
\to1.
$
This completes the proof.

\section{Proof of Theorem~\ref{thm:avg_vs_local}}
\label{app:avg_vs_local}

We first provide a simple lemma on the perturbation of matrix
inverses.

\begin{Lemma}
	\label{UpperboundAppendix}
	Let $\mathbf H_1\succ 0$ and $\mathbf H_2\succ 0$ be two Hermitian
	matrices. Suppose that there exists a constant $\mu>0$ such that
$
			\mu\mathbf I\preceq \mathbf H_j,$
for 
$
			j=1,2.
$
	Then,
$
			\|
			\mathbf H_1^{-1}-\mathbf H_2^{-1}
			\|
			\le
			\frac{1}{\mu^2}
			\|
			\mathbf H_1-\mathbf H_2
			\|.
$
\end{Lemma}

\begin{proof}
	Using the matrix inverse identity, we have
$
			\mathbf H_1^{-1}-\mathbf H_2^{-1}
			=
			\mathbf H_1^{-1}
			\big(
			\mathbf H_2-\mathbf H_1
			\big)
			\mathbf H_2^{-1}.
$
By applying the
	submultiplicative property of the spectral norm yields
$
			\|
			\mathbf H_1^{-1}-\mathbf H_2^{-1}
			\|
			\le
			\|
			\mathbf H_1^{-1}
			\|
			\|
			\mathbf H_2-\mathbf H_1
			\|
			\|
			\mathbf H_2^{-1}
			\|.
$
	Since
	$\mu\mathbf I\preceq\mathbf H_j$ for $j=1,2$, we have
$
			\lambda_{\min}(\mathbf H_j)
			\ge \mu,
			\big\|\mathbf H_j^{-1}\big\|
			=
			\frac{1}{\lambda_{\min}(\mathbf H_j)}
			\le
			\frac{1}{\mu},
			j=1,2.
$
	Substituting the above bounds gives
$
			\|
			\mathbf H_1^{-1}-\mathbf H_2^{-1}
			\|
			\le
			\frac{1}{\mu^2}
			\|
			\mathbf H_1-\mathbf H_2
			\|,
$
	which completes the proof.
\end{proof}

We now prove Theorem~\ref{thm:avg_vs_local}. From the covariance
model~\eqref{eq:Rxx}, all signal covariance terms are positive
semidefinite and the noise covariance is
$\sigma_n^2\mathbf I$ with $\sigma_n^2>0$. Setting
$
	\mu\triangleq\sigma_n^2,
$
we have
$\mathbf R_0
		\succeq \mu\mathbf I$,
	$	\mathbf R_\delta
		\succeq \mu\mathbf I$, and
	$	\widetilde{\mathbf R}
		=
		\frac{1}{M}\sum_{m=1}^{M}\mathbf R(\mathbf x_m)
		\succeq \mu\mathbf I.
$
Therefore, all covariance matrices appearing in the theorem are
positive definite and invertible.
Define the inverse-domain historical bias at the current anchor as
$
		\beta_\star
		\triangleq
		\big\|
		\widetilde{\mathbf R}^{-1}
		-\mathbf R_0^{-1}
		\big\|.
$
Since we have the assumption $\widetilde{\mathbf R}\neq\mathbf R_0$, the inverse mapping is
injective on the set of positive definite matrices, it follows that
$
	\widetilde{\mathbf R}^{-1}
	\neq
	\mathbf R_0^{-1},
$
and hence
$
	\beta_\star>0.
$

Next, define the inverse-domain local-anchor error as
\begin{equation}
	\begin{aligned}\label{defLAE}
		\varepsilon_\delta
		&\triangleq
		\big\|
		\mathbf R_0^{-1}
		-\mathbf R_\delta^{-1}
		\big\|.
	\end{aligned}
\end{equation}
Applying Lemma~\ref{UpperboundAppendix} with
$\mathbf H_1=\mathbf R_0$ and
$\mathbf H_2=\mathbf R_\delta$,  gives
$
		\varepsilon_\delta
		\le
		\frac{1}{\mu^2}
		\big\|
		\mathbf R_\delta-\mathbf R_0
		\big\|.
$
Furthermore, by Lemma~2, we can obtain
\begin{equation}
	\begin{aligned}\label{lemma2app}
		\big\|
		\mathbf R_\delta-\mathbf R_0
		\big\|
		&=
		\big\|
		\mathbf R(\mathbf x_\star+\boldsymbol\delta)
		-\mathbf R(\mathbf x_\star)
		\big\|\le
		L_R\big\|\boldsymbol\delta\big\|.
	\end{aligned}
\end{equation}
Consequently, combining \eqref{defLAE}-\eqref{lemma2app}, we have
\begin{equation}
	\label{eq:local_inverse_error_appendix}
	\begin{aligned}
		\varepsilon_\delta
		&\le
		\frac{L_R}{\mu^2}
		\big\|\boldsymbol\delta\big\|\triangleq
		c_1\big\|\boldsymbol\delta\big\|,
	\end{aligned}
\end{equation}
where
$
		c_1
		\triangleq
		\frac{L_R}{\mu^2}.
$
Thus, for any $\boldsymbol\delta\in\mathbb R^{N_r}$, we can write
$
		\widetilde{\mathbf R}^{-1}
		-\mathbf R_\delta^{-1}
		=
		\big(
		\widetilde{\mathbf R}^{-1}
		-\mathbf R_0^{-1}
		\big)
		+
		\big(
		\mathbf R_0^{-1}
		-\mathbf R_\delta^{-1}
		\big).
$
Applying the reverse triangle inequality yields
\begin{equation}
	\label{eq:reverse_triangle_appendix}
	\begin{aligned}
		\big\|
		\widetilde{\mathbf R}^{-1}
		-\mathbf R_\delta^{-1}
		\big\|
		&\ge
		\big\|
		\widetilde{\mathbf R}^{-1}
		-\mathbf R_0^{-1}
		\big\|
		-
		\big\|
		\mathbf R_0^{-1}
		-\mathbf R_\delta^{-1}
		\big\|
		\\
		&=
		\beta_\star-\varepsilon_\delta.
	\end{aligned}
\end{equation}
Subtracting $\varepsilon_\delta$ from both sides of
\eqref{eq:reverse_triangle_appendix}, and then using
\eqref{eq:local_inverse_error_appendix}, gives
\begin{equation}
	\label{eq:strict_gap_appendix}
	\begin{aligned}
		\big\|
		\widetilde{\mathbf R}^{-1}
		-\mathbf R_\delta^{-1}
		\big\|
		-
		\big\|
		\mathbf R_0^{-1}
		-\mathbf R_\delta^{-1}
		\big\|\ge
		\beta_\star-2\varepsilon_\delta\ge
		\beta_\star
		-2c_1\big\|\boldsymbol\delta\big\|.
	\end{aligned}
\end{equation}
If $c_1=0$, then
\eqref{eq:local_inverse_error_appendix} implies
$\varepsilon_\delta=0$ for every $\boldsymbol\delta$, and the claimed
inequality follows immediately by choosing, for example,
$c_0(\mathbf x_\star)=\beta_\star/2$ and any
$\rho_0(\mathbf x_\star)>0$.

We therefore consider the nontrivial case $c_1>0$, where we can simply choose
$
		c_0(\mathbf x_\star)
		\triangleq
		\frac{\beta_\star}{2},$ and
$
		\rho_0(\mathbf x_\star)
		\triangleq
		\frac{\beta_\star}{4c_1}
		=
		\frac{\mu^2\beta_\star}{4L_R}.
$
Since $\beta_\star> 0$, both constants $c_0(\mathbf x_\star)$ and $\rho_0(\mathbf x_\star)$ are strictly positive. For every
$\boldsymbol\delta$ satisfying
$\|\boldsymbol\delta\|\le\rho_0(\mathbf x_\star)$, we obtain from
\eqref{eq:strict_gap_appendix} that
$
		\|
		\widetilde{\mathbf R}^{-1}
		-\mathbf R_\delta^{-1}
		\|
		-
		\|
		\mathbf R_0^{-1}
		-\mathbf R_\delta^{-1}
		\|
		\ge
		\beta_\star
		-2c_1\rho_0(\mathbf x_\star)
		=
		\frac{\beta_\star}{2}=
		c_0(\mathbf x_\star),
$
which completes the proof.

\section{Proof of Theorem~\ref{thm:sample_gap_stability}}
\label{app:sample_gap_stability}

For brevity, write
$c_0\triangleq c_0(\mathbf x_\star)$,
$\rho_0\triangleq\rho_0(\mathbf x_\star)$, and
$\tau_0\triangleq\tau_0(\mathbf x_\star)$ throughout the proof.

From the covariance model~\eqref{eq:Rxx}, all signal covariance
terms are positive semidefinite, while the noise covariance is
$\sigma_n^2\mathbf I$. Hence, we have 
$\mathbf R_0\succeq\sigma_n^2\mathbf I$ and
$\widetilde{\mathbf R}\succeq\sigma_n^2\mathbf I$. Since
the estimation errors
$\boldsymbol\Delta_\star$ and
$\widetilde{\boldsymbol\Delta}$ are Hermitian, Weyl's inequality gives the following inequality on the event $\mathcal E$:
$
		\lambda_{\min}
		(
		\hat{\mathbf R}(\mathbf x_\star)
		)
		\ge
		\lambda_{\min}(\mathbf R_0)
		-
		\|
		\boldsymbol\Delta_\star
		\|
		\ge
		\sigma_n^2-\tau_0
		\ge
		{\sigma_n^2}/{2}$ and $
		\lambda_{\min}
		(
		\hat{\mathbf R}_{\mathrm{avg}}
		)
		\ge
		\lambda_{\min}
		(
		\widetilde{\mathbf R}
		)
		-
		\|
		\widetilde{\boldsymbol\Delta}
		\|
		\ge
		{\sigma_n^2}/{2}.
$
Thus,
$\hat{\mathbf R}(\mathbf x_\star)
\succeq\frac{\sigma_n^2}{2}\mathbf I$ and
$\hat{\mathbf R}_{\mathrm{avg}}
\succeq\frac{\sigma_n^2}{2}\mathbf I$ on $\mathcal E$.
We further define
$\eta_{\mathrm{loc}}
\triangleq
\|
\hat{\mathbf R}(\mathbf x_\star)^{-1}
-\mathbf R_0^{-1}
\|$,
$\eta_{\mathrm{avg}}
\triangleq
\|
\hat{\mathbf R}_{\mathrm{avg}}^{-1}
-\widetilde{\mathbf R}^{-1}
\|$, and similar to Lemma 4, we can obtain
$
		\eta_{\mathrm{loc}}
		\le
		{2\tau_0}/{\sigma_n^4},
		\eta_{\mathrm{avg}}
		\le
		{2\tau_0}/{\sigma_n^4},
$
and hence
\begin{equation}
	\label{eq:inverse_error_sum_appendix_e}
	\eta_{\mathrm{loc}}
	+
	\eta_{\mathrm{avg}}
	\le
	\frac{4\tau_0}{\sigma_n^4}
	\le
	\frac{c_0}{2},
\end{equation}
where the last inequality follows from
$\tau_0\le c_0\sigma_n^4/8$.

Consider any $\boldsymbol\delta$ satisfying
$\|\boldsymbol\delta\|\le\rho_0$. By the reverse triangle
inequality, Theorem~\ref{thm:avg_vs_local}, and
\eqref{eq:inverse_error_sum_appendix_e}, we obtain
\begin{equation}
	\begin{aligned}
		\big\|
		\hat{\mathbf R}_{\mathrm{avg}}^{-1}
		-&\mathbf R_\delta^{-1}
		\big\|
		\ge
		\big\|
		\widetilde{\mathbf R}^{-1}
		-\mathbf R_\delta^{-1}
		\big\|
		-\eta_{\mathrm{avg}}
		\\
		&\ge
		\big\|
		\mathbf R_0^{-1}
		-\mathbf R_\delta^{-1}
		\big\|
		+c_0
		-\eta_{\mathrm{avg}}
		\\
		&\ge
		\big\|
		\hat{\mathbf R}(\mathbf x_\star)^{-1}
		-\mathbf R_\delta^{-1}
		\big\|
		+
		\frac{c_0}{2}.
	\end{aligned}
\end{equation}
The third inequality follows from
$\|
\mathbf R_0^{-1}-\mathbf R_\delta^{-1}
\|
\ge
\|
\hat{\mathbf R}(\mathbf x_\star)^{-1}
-\mathbf R_\delta^{-1}
\|
-\eta_{\mathrm{loc}}$.
Since neither $\mathcal E$ nor the bounds on
$\eta_{\mathrm{loc}}$ and $\eta_{\mathrm{avg}}$ depend on
$\boldsymbol\delta$, the above inequality holds simultaneously for
all $\|\boldsymbol\delta\|\le\rho_0$.

It remains to bound the probability of $\mathcal E$. From
$\widetilde{\boldsymbol\Delta}
=\frac{1}{M}\sum_{m=1}^{M}
\boldsymbol\Delta_m$ and the triangle inequality, we have
$
	\mathbb E
	[
	\|
	\widetilde{\boldsymbol\Delta}
	\|]
	\le
	\frac{1}{M}
	\sum_{m=1}^{M}
	\mathbb E[
	\|
	\boldsymbol\Delta_m
	\|
	]
	\le
	\widetilde{\varepsilon}_M(T).
$
Applying the union bound and Markov's inequality gives
$
		\mathbb P(\mathcal E)
		\ge 1-
		\mathbb P
		(
		\|
		\boldsymbol\Delta_\star
		\|
		>
		\tau_0
		)
		-
		\mathbb P
		(
		\|
		\widetilde{\boldsymbol\Delta}
		\|
		>
		\tau_0)
		\ge 1-
		\frac{
			\varepsilon_\star(T)
			+
			\widetilde{\varepsilon}_M(T)
		}{
			\tau_0
		}.
$
Therefore, for every fixed finite $M$, Lemma~3 implies
$\varepsilon_\star(T)\to0$ and
$\varepsilon_m(T)\to0$ for all $m=1,\ldots,M$. It follows that
$\widetilde{\varepsilon}_M(T)
=\frac{1}{M}\sum_{m=1}^{M}\varepsilon_m(T)\to0$.
Since $\tau_0>0$ is independent of $T$, we conclude that
$\mathbb P(\mathcal E)\to1$ as $T\to\infty$. This completes the
proof.}}
	
}

\vspace{-0.0cm}
\bibliography{ref1.bib}

\bibliographystyle{IEEEtran}

\end{document}